\newcommand{\argmax}{\mathop{\mathrm{argmax\,}}}
\newcommand{\boldB}{{\boldsymbol{B}}}
\newcommand{\boldR}{{\boldsymbol{R}}}
\newcommand{\bft}{\fontseries{b}\selectfont}
  \providecommand\BibTeX{{%
    \normalfont B\kern-0.5em{\scshape i\kern-0.25em b}\kern-0.8em\TeX}}}
\begin{document}

\title{Enumerating Fair Packages for Group Recommendations}

\author{Ryoma Sato}
\email{r.sato@ml.ist.i.kyoto-u.ac.jp}
\affiliation{%
  \institution{Kyoto University / RIKEN AIP}
}


\begin{abstract}
Package-to-group recommender systems recommend a set of unified items to a group of people. 
Different from conventional settings, it is not easy to measure the utility of group recommendations because it involves more than one user. In particular, fairness is crucial in group recommendations. Even if some members in a group are substantially satisfied with a recommendation, it is undesirable if other members are ignored to increase the total utility. Many methods for evaluating and applying the fairness of group recommendations have been proposed in the literature. However, all these methods maximize the score and output only one package. This is in contrast to conventional recommender systems, which output several (e.g., top-$K$) candidates. This can be problematic because a group can be dissatisfied with the recommended package owing to some unobserved reasons, even if the score is high. To address this issue, we propose a method to enumerate fair packages efficiently. Our method furthermore supports filtering queries, such as top-$K$ and intersection, to select favorite packages when the list is long. We confirm that our algorithm scales to large datasets and can balance several aspects of the utility of the packages.
\end{abstract}


\begin{CCSXML}
<ccs2012>
   <concept>
       <concept_id>10002951.10003317.10003347.10003350</concept_id>
       <concept_desc>Information systems~Recommender systems</concept_desc>
       <concept_significance>500</concept_significance>
       </concept>
   <concept>
       <concept_id>10002951.10003317.10003338.10003345</concept_id>
       <concept_desc>Information systems~Information retrieval diversity</concept_desc>
       <concept_significance>500</concept_significance>
       </concept>
   <concept>
       <concept_id>10002951.10003317.10003338.10003346</concept_id>
       <concept_desc>Information systems~Top-k retrieval in databases</concept_desc>
       <concept_significance>500</concept_significance>
       </concept>
   <concept>
       <concept_id>10003752.10003809.10010052.10010053</concept_id>
       <concept_desc>Theory of computation~Fixed parameter tractability</concept_desc>
       <concept_significance>500</concept_significance>
       </concept>
 </ccs2012>
\end{CCSXML}

\ccsdesc[500]{Information systems~Recommender systems}
\ccsdesc[500]{Information systems~Information retrieval diversity}
\ccsdesc[500]{Information systems~Top-k retrieval in databases}
\ccsdesc[500]{Theory of computation~Fixed parameter tractability}

\keywords{fairness; recommender systems; enumeration}

\maketitle

\section{Introduction}
With the use of recommender systems in diverse scenarios, a new setting called group recommendation \cite{jameson2007recommendation, yahia2009group, baltrunas2010group, berkovsky2010group} has emerged. A group recommender system recommends items to a set of users (i.e., a group). For example, PolyLens \cite{o2001polylens} recommends movies to a group of people who are planning to go to a movie together. Let's browse \cite{lieberman1999lets} recommends web pages to a group of people who are browsing web pages together. Another emerging scenario is package recommendation, in which a set of items is considered as a unified package toward a single common goal \cite{xie2010breaking, yahia2014composite, zhu2014bundle, xie2014generating}. For example, in music playlist recommendations, each package (i.e., playlist) should be consistent with respect to genres. In movie package recommendations, each package should be diverse to avoid monotony. Package recommendations are also referred to as bundle recommendations \cite{zhu2014bundle} and composite recommendations \cite{xie2010breaking}.

In this paper, we consider the combination of these two settings, i.e., package-to-group recommendations \cite{qi2016recommending, serbos2017fairness}, where a recommender system recommends packages to a group of people, and focus on the fairness problem in this setting. For example, consider the case of selecting the music to play in a car. Suppose all except one person in a car like music in the order of pop, jazz, and rock music, whereas the single remaining person likes music in the opposite order. Although a recommender system may recommend pop music to satisfy the majority to the maximum extent, it may dissatisfy the last person, who may refuse to go along with the solution. An amicable and fair solution is to play jazz music, which is liked moderately by all occupants of the car.

Existing fair package-to-group recommendation algorithms \cite{qi2016recommending, serbos2017fairness} define fairness scores of a list and determine the optimal package by maximizing the scores. These methods can compute an excellent package with respect to their fairness scores. However, a critical drawback is that they recommend only one package. Therefore, a group has only one candidate. This is in contrast to the standard setting, where several (top-$K$) candidates are presented, and users can choose their favorite items. 

To solve this problem, we propose to \emph{enumerate} all fair packages. Our proposed method outputs a set of packages, i.e., a set of sets of items. More specifically, given a threshold $\tau$, our proposed method computes all packages with at least fairness scores $\tau$. Although these sets may contain unsatisfactory packages, users can choose their most preferred package by filtering.

However, there are several technical challenges to achieving this goal. First, it was shown that the problem of maximizing fairness is NP-hard \cite{serbos2017fairness, Lin2017fairness}. The enumeration problem is more difficult than the maximization problem because if we can efficiently solve the enumeration problem, we can solve the maximization problem as well by simply adjusting the threshold $\tau$. Besides, the enumeration problem requires more detailed solutions than the maximization problem. Specifically, not only the optimal value but also the set of qualified packages needs to be provided. Therefore, the enumeration problem is also NP-hard, and it may appear impossible to efficiently solve this problem. A crucial technical contribution of this paper is that we prove that the optimization and enumeration problems are fixed-parameter tractable (FPT) \cite{downey2012parameterized}. Therefore, they can be solved if the parameter is small. Fortunately, most instances in the literature have small values. Thus, our proposed method can efficiently solve the enumeration problem. The second challenge is the existence of numerous qualified packages because of the exponential increase in the combination of packages with respect to the number of items. To solve this problem, We manipulate sets of packages and output solutions in a compressed form, namely a ZDD, instead of manipulating them one by one. ZDDs enable efficient filtering queries as well as efficient enumeration.

The contributions of this paper are as follows:
\begin{itemize}
    \item We propose enumerating all fair packages to allow users to select their favorite packages in their own way.
    \item We propose an efficient method that enumerates all fair packages exactly. Our method not only enumerates packages but also samples packages, lists important top-$K$ packages, and efficiently filters packages.
    \item We confirm the effectiveness and efficiency of our method in empirical evaluations. We show that our method can enumerate over one trillion packages in several seconds.
\end{itemize}

\noindent \uline{\textbf{Reproducibility:}} Our code is available at \url{https://github.com/joisino/fape}. The library can be installed via \texttt{pip install fape}.

\section{Preliminary}

\subsection{Notations}

For every positive integer $n \in \mathbb{Z}_+$, $[n]$ denotes the set $\{ 1, 2, \dots n \}$. $|A|$ denotes the number of elements contained in set $A$. Let $A^{(K)} = \{B \subseteq A \mid |B| = K\}$ be the set of $K$-subsets of $A$.
Let $\mathcal{U} = [n_u]$ denote the set of users and $\mathcal{I} = [n]$ denote the set of items, where $n_u$ and $n$ are the numbers of users and items, respectively. Without loss of generality, we assume the users and the items are numbered as $1, \dots, n_u$ and $1, \dots, n$, respectively.

\subsection{Package-to-group recommendations}

Let $g$ be the number of members in a group, and let $\mathcal{G} \in \mathcal{U}^{(g)}$ be a group, i.e., a set of users. Without loss of generality, we assume $\mathcal{G} = [g]$. We also assume that the preference $\boldR_{ui} \in \mathbb{R}_{\ge 0}$ (e.g., rating) of user $u \in \mathcal{U}$ for item $i \in \mathcal{I}$ for each user and item is known, which is a common assumption made in previous studies \cite{qi2016recommending, serbos2017fairness, kaya2020ensuring}. When a rating matrix at hand is incomplete, we complete the matrix using off-the-shelf completion algorithms. For example, \citet{serbos2017fairness} utilized the item-to-item collaborative filtering \cite{sarwar2001item}, and \citet{kaya2020ensuring} employed matrix factorization \cite{pilaszy2010fast} to complete rating matrices.

A package is represented by a set of items. Let $K \in \mathbb{Z}_+$ be the size of a package. We recommend a list of packages to group $\mathcal{G}$. A straightforward method is to recommend the package $\mathcal{P}$ with the maximum total preference: $\argmax_{\mathcal{P} \in \mathcal{I}^{(K)}} S(\mathcal{P}) \vcentcolon= \sum_{i \in \mathcal{P}} \sum_{u \in \mathcal{G}} \boldR_{ui}$. However, this is problematic because the optimal package may be unfair to the minority people in the group, as discussed in the introduction. Several fairness measures have been proposed to avoid unfair recommendations. In this paper, we consider the following two measures:

\begin{itemize}
    \item \textbf{Proportionality} \cite{qi2016recommending}: We say user $u$ likes item $i$ if $\boldR_{ui}$ is ranked in the top-$\Delta$\% in $\boldR_{u\colon}$. We say that package $\mathcal{P}$ is $m$-proportional for user $u$ if user $u$ likes at least $m$ items in package $\mathcal{P}$. The $m$-proportionality of package $\mathcal{P}$ is the number of members for whom package $\mathcal{P}$ is $m$-proportional.
    \item \textbf{Envy-freeness} \cite{serbos2017fairness}: We say user $u$ is envy-free for item $i$ if $\boldR_{ui}$ is ranked in the top-$\Delta$\% in $\{\boldR_{vi} \mid v \in \mathcal{G}\}$. We say that package $\mathcal{P}$ is $m$-envy-free for user $u$ if user $u$ is envy-free for at least $m$ items in package $\mathcal{P}$. The $m$-envy-freeness of package $\mathcal{P}$ is the number of members for whom package $\mathcal{P}$ is $m$-envy-free.
\end{itemize}

These measures have been successfully used in the fairness of package-to-group recommendations \cite{qi2016recommending, serbos2017fairness}.
In the following, we assume that $m = 1$ (i.e., single proportionality and single envy-freeness). We describe how to extend our method to multi proportionality and envy-freeness in Section \ref{sec: extend}.

Previous works proposed methods for optimizing the above-mentioned fairness measures. However, they recommend only one package, which is in contrast to the standard setting in which several options are suggested to users. In this paper, we solve the more general problem of enumerating all fair packages. Specifically, given threshold $\tau \in \mathbb{Z}_+$, we enumerate all packages whose fairness values are larger than $\tau$. Formally, we solve the following problem.

\vspace{0.05in}
\noindent \textbf{Problem 1. Enumerating Fair Packages.} \\
\textbf{Given:} Group $\mathcal{G}$, package size $K$, rating matrix $\boldR \in \mathbb{R}_{\ge 0}^{n_u \times n}$, and threshold $\tau \in \mathbb{N}_+$. \\
\textbf{Output:} All packages with proportionality at least $\tau$: $\{\mathcal{P} \in \mathcal{I}^{(K)} \mid \text{proportionality(} \mathcal{P} \text{)} \ge \tau\}$, and all packages with envy-freeness at least $\tau$: $\{\mathcal{P} \in \mathcal{I}^{(K)} \mid \text{envy-freeness(} \mathcal{P} \text{)} \ge \tau\}$.

\vspace{0.05in}
We emphasize that we do not resort to any approximations. The output should contain \emph{all} packages that meet the constraints and should not contain any other packages. The size of the output may be exponentially large. Thus, we output packages in compressed form, which is subsequently introduced in Section \ref{sec: dd}.

It should be noted that we do not necessarily use all output packages. Rather, for example, we enumerate (1) the set $A$ of packages with high proportionality, (2) the set $B$ of packages with high envy-freeness, and (3) the set $C$ of packages with high preferences, and find an ideal package by inspecting common packages contained in all of the three sets. Although the number of the filtered packages in $A \cap B \cap C$ may be small, each of $A$, $B$, and $C$ is extremely large; thus, we require an efficient enumeration algorithm.

\subsection{Fixed Parameter Tractable (FPT)}

We review FPT algorithms briefly. Let $N$ denote the size of an input, and let $k$ denote a parameter of a problem. For example, in this paper, we consider the size of a group as the parameter. Recall that a problem in P (complexity class) is efficiently solved only if $N$ is moderately small. Therefore, if the fair package-to-group recommendation problem is in P, it is efficiently solved even if, e.g., the number of users is $10^6$, number of items is $10^6$, package size is $10^6$, \emph{and} group size is $10^6$, which is an improbable scenario. The main idea of an FPT algorithm is restricting a parameter to be small. The fair package-to-group recommendation problem with the group size parameter is FPT if the problem is efficiently solved when the group size is small, e.g., $10$, but it requires that the problem is efficiently solved even when the numbers of users and items and the package size are large, e.g., $10^6$. Therefore, an FPT algorithm efficiently solves large instances with many users and items only if the group size is small.

Formally, algorithms that run in $O(\text{Poly}(N) f(k))$ time for some polynomial $\text{Poly}(N)$ and some function $f(k)$ are called FPT algorithms with respect to parameter $k$. Here, although the degree of $\text{Poly}(N)$ should be independent of $k$, $f(k)$ can be exponentially large. A problem with parameter $k$ is FPT if it has an FPT algorithm. It should be noted that this is different from the definition of XP (complexity class)\footnote{Algorithms that run in $O(N^{f(k)})$ time are called XP algorithms, i.e., they run in polynomial time for each fixed $k$.}, which requires a polynomial-time algorithm when the parameter is a constant. For example, a problem is in XP when it is solvable in $O(N^k)$ time; however, it is not necessarily FPT because the degree of the polynomial depends on $k$. A problem is FPT when it is solvable in $O(N 2^k)$; however, it is not necessarily in P because it takes exponential time when $k$ is linearly large. To illustrate the difference of XP and FPT algorithms, suppose $N = 10^6$ and $k = 10$. An $O(N^k)$ time algorithm requires roughly $N^k = 10^{60}$ operations, and thus, it is not tractable. By contrast, an $O(N 2^k)$ algorithm requires roughly $N 2^k = 10^6 \cdot 2^{10} \approx 10^9$ operations, and thus, it is tractable. Moreover, there exists a theoretical gap between FPT and XP problems \cite{downey1995fixed}. FPT is a well established field, and textbooks and surveys \cite{downey2012parameterized, cygan2015parameterized} are available for more details on FPT algorithms.

In this paper, we prove that the fair package-to-group recommendation problem with the group size parameter is FPT. In the literature, many examples considering small group sizes (e.g., families) are reported\footnote{Notably, some works consider larger groups, e.g., $k \le 20$ in \cite{anagnostopoulos2017tour}}. For example, \citet{serbos2017fairness} set the default group size as $8$ and the maximum group size as $16$. \citet{Lin2017fairness} and \citet{kaya2020ensuring} set the maximum group size as $8$. Therefore, our FPT algorithm efficiently solves their cases with a theoretical guarantee.

\subsection{Zero-suppressed Decision Diagrams (ZDDs)} \label{sec: dd}

A ZDD \cite{minato1993zero} is an efficient data structure for representing a family of sets. The universe set of set families is the item set $\mathcal{I}$ throughout the paper, and the order of variables are the same as the indices of the item set. A ZDD is represented by a rooted directed acyclic graph, and it contains a top node $\top$, a bottom node $\bot$, and branch nodes. Each branch node $v$ is labeled with an index $i_v \in \mathcal{I}$ of an item and has two outgoing edges: $0$-edge and $1$-edge. The label of the head node should be larger than that of the tail node (i.e., ordered). Figure \ref{fig: example} (a) shows an example of ZDDs. A ZDD represents a family of sets. Let $\mathcal{X} = \{i_1, i_2, \dots, i_l\} \subseteq \mathcal{I}$ be a set of items such that $i_1 < i_2 < \dots < i_l$. $\mathcal{X}$ is contained in a ZDD if and only if there exists a path $(. \xrightarrow{0})^* ~i_1 \xrightarrow{1} (. \xrightarrow{0})^* ~i_2 \xrightarrow{1} \cdots (. \xrightarrow{0})^* ~i_l \xrightarrow{1} (. \xrightarrow{0})^* \top$ that starts at the root node, where a dot denotes an arbitrary node, $\xrightarrow{0}$ denotes a $0$-edge, and $\xrightarrow{1}$ denotes a $1$-edge, and an asterisk indicates zero or more occurrences of the preceding element. For example, the red bold path in Figure \ref{fig: example} (b) represents $\{1, 3, 5\}$, and the red bold path in Figure \ref{fig: example} (c) corresponds to $\{2, 3, 4\}$. Therefore, the ZDD in Figure \ref{fig: example} (a) represents a set family $\{\{1, 3, 5\}, \{1, 2\}, \{2, 3, 4\}, \{4, 5\}\}$.

We use the Fraktur fonts such as $\mathfrak{A}$, $\mathfrak{B}$, and $\mathfrak{C}$ to denote ZDDs. We use the same variable for the set family represented by a ZDD. We consider the size of a ZDD to be the number of branch nodes in the ZDD. For a set family $\mathfrak{A}$, let $|\mathfrak{A}|$ denote the number of sets in $\mathfrak{A}$, and let $\|\mathfrak{A}\|$ denote the number of branch nodes in ZDD $\mathfrak{A}$.

The most basic operation for ZDDs is reduction, which is depicted in Figure \ref{fig: reduction}. Specifically, duplicate nodes that point to the same children and have the same item label are merged, and a node whose $1$-edge is connected to the bottom node is skipped. This process is continued until no nodes are redundant\footnote{We say a node is redundant if its $1$-edge is connected to the bottom node, or some other nodes with the same item label point to the same children.}. ZDD $\mathfrak{A}$ can be reduced in $O(\|\mathfrak{A}\|)$ time. The reduction does not change the set family that $\mathfrak{A}$ represents and decreases the size of the ZDD. It is known that a set family has a unique reduced ZDD, and any ZDD can reach the reduced form by any sequence of reduction operations. In some literature, ZDDs stand only for reduced ZDDs, whereas we use both non-reduced and reduced ZDDs in this paper.

\begin{figure}
    \centering
    \includegraphics[width=\hsize]{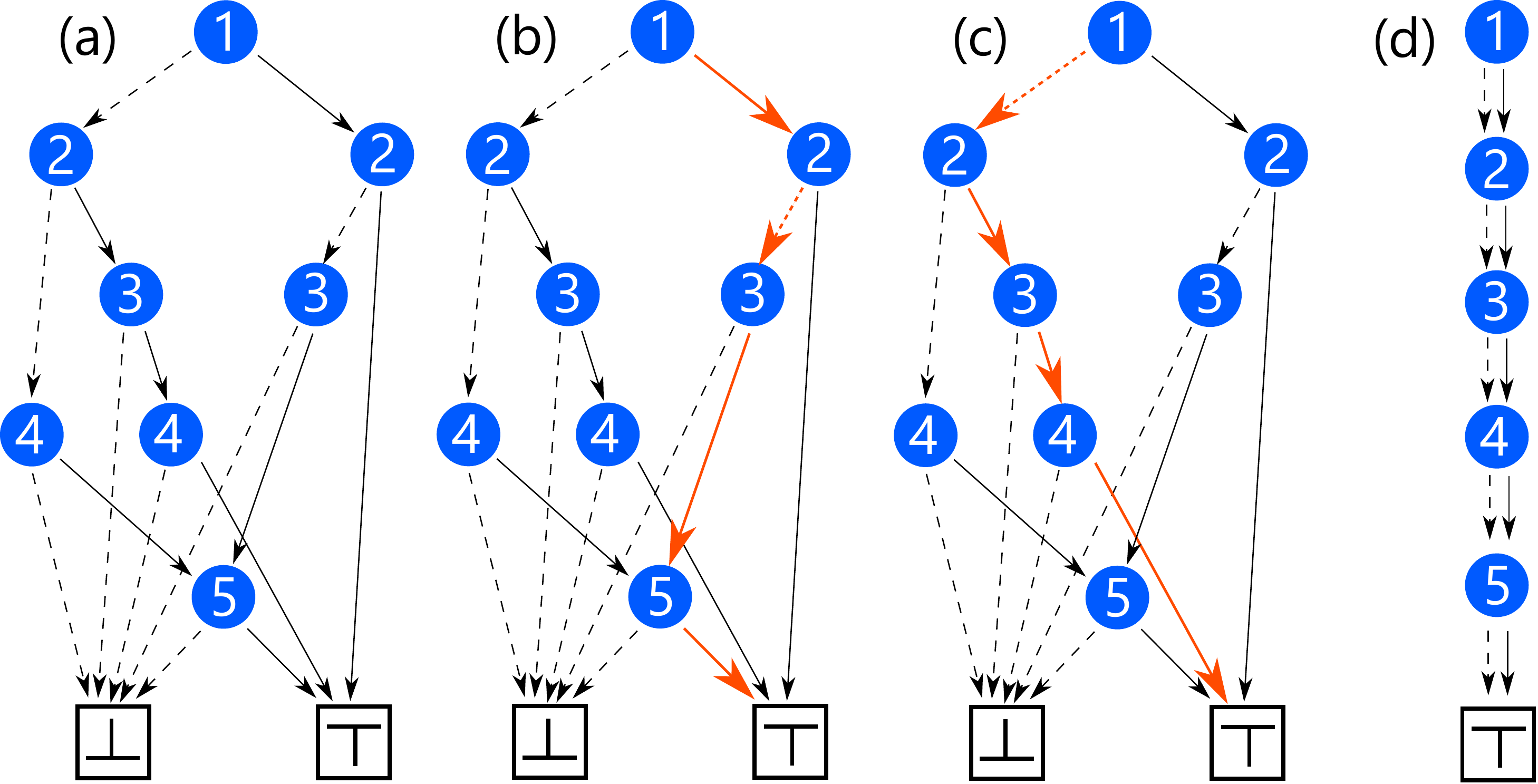}
    \caption{Examples of ZDDs: dashed and solid lines represent $0$- and $1$-edges, respectively. (a) ZDD representing family $\{\{1, 3, 5\}, \{1, 2\}, \{2, 3, 4\}, \{4, 5\}\}$. (b) Path corresponding to $\{1, 3, 5\}$. (c) Path corresponding to $\{2, 3, 4\}$. (d) ZDD representing the power set of $\{1, 2, 3, 4, 5\}$.}
    \label{fig: example}
    \vspace{-0.1in}
\end{figure}

\begin{figure}
    \centering
    \includegraphics[width=\hsize]{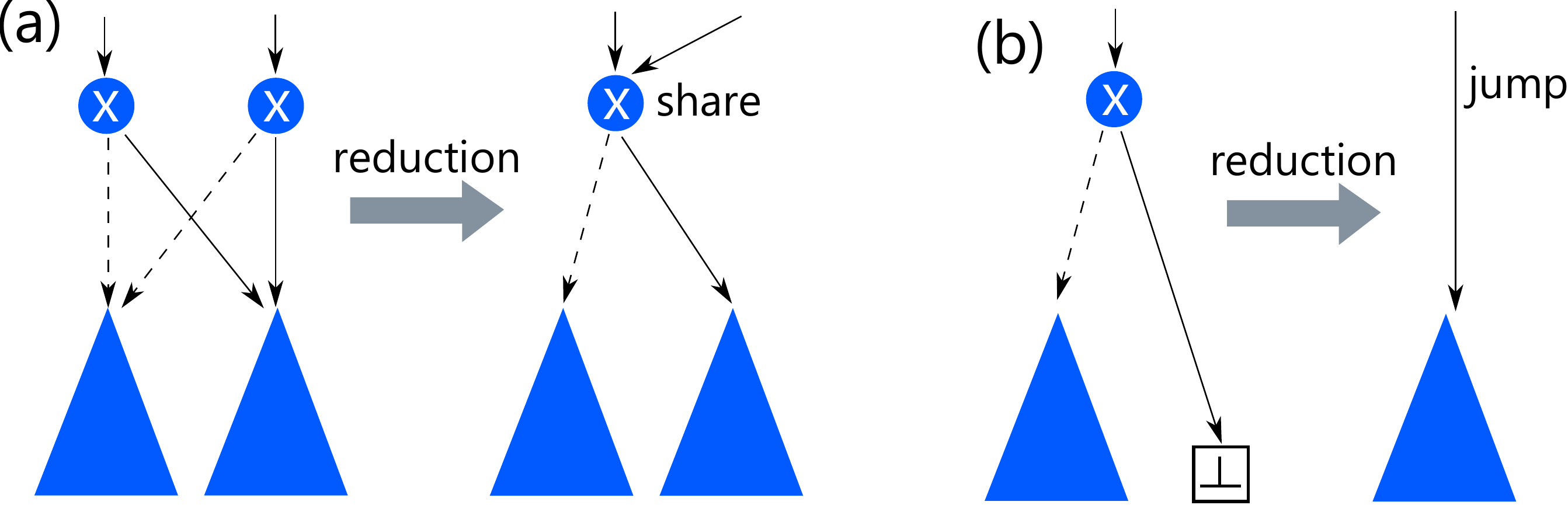}
    \caption{Reduction of ZDDs: (a) If two nodes of the same index point to the same successors, they merge into a single node. (b) If the $1$-edge of a node points to the bottom node, that node is skipped. Intuitively, if no sets contain item $x$ (i.e., $1$-edge points to the bottom node), item $x$ can be ignored.}
    \label{fig: reduction}
    \vspace{-0.1in}
\end{figure}

One of the benefits of ZDDs is that ZDDs consume less memory than raw representations. For example, Figure \ref{fig: example} (d) shows a ZDD that represents the power set of $\{1, 2, 3, 4, 5\}$. Although this set family contains $2^5 = 32$ sets, the ZDD contains only five nodes and ten edges. In general, the advantage tends to increase exponentially as the size of the universe set increases. Besides, the size of a reduced ZDD is at most the total number of elements in the sets in the family. Specifically, $\|\mathfrak{A}\| \le \sum_{\mathcal{A} \in \mathfrak{A}} |\mathcal{A}|$ holds because when a new set $\mathcal{X}$ is inserted to a ZDD, at most $|\mathcal{X}|$ nodes are created. Therefore, a ZDD is as memory efficient as a raw description, even in the worst case, and performs much better in practice. 

Another benefit is that ZDDs support many operations for set families \cite{knuth2009art}. The celebrated apply operation \cite{bryant1986graph, minato1993zero} is an example, which efficiently merges two ZDDs. Specifically, it computes the reduced ZDDs that represent $\mathfrak{A} \cup \mathfrak{B}$, $\mathfrak{A} \cap \mathfrak{B}$, and $\mathfrak{A} \backslash \mathfrak{B}$ from ZDDs $\mathfrak{A}$ and $\mathfrak{B}$ in $O(\|\mathfrak{A}\| \|\mathfrak{B}\|)$ time in the worst case. It runs in linear time in practice without pathological cases \cite{minato1993zero, bryant1986graph}. The crux of the apply operation is that it manipulates compressed ZDDs directly, and the running time is independent of the number of sets contained in the ZDDs but only dependent on the size of the ZDDs. Recall that the number of sets can be exponentially larger than the size of the ZDD (e.g., the power set). Thus, the benefit can be exponentially large.

\section{Method}

We introduce our proposed method and analyze time complexity. Although our proposed method is simple and easy to implement, it has a strong theoretical property on the fixed parameter tractability.

\subsection{Maximization Problem} \label{sec: max}

As we highlighted in the introduction, the enumeration problem is more difficult than the maximization problem. We first show that the maximization problem can be efficiently solved, even though it is an NP-hard problem.
We consider the proportionality here. Formally, the maximization problem is defined as follows:

\vspace{0.05in}
\noindent \textbf{Problem 2. Maximizing Fairness.} \\
\textbf{Given:} Group $\mathcal{G}$, package size $K$, and rating matrix $\boldR \in \mathbb{R}_{\ge 0}^{n_u \times n}$. \\
\textbf{Output:} The maximum proportionality of all possible packages, i.e., $\max \{\text{proportionality} (\mathcal{P}) \mid \mathcal{P} \in \mathcal{I}^{(K)}\}$.

\vspace{0.05in}
A naive algorithm for this problem is to check all $K$-sets of items. However, this algorithm requires at least $O(n^K)$ time and is not FPT. The dynamic programming algorithm introduced in this section efficiently solves this problem and serves as a foundation for the analysis of our proposed method in the next section.

\begin{theorem} \label{thm: max}
The maximizing fairness problem can be solved in $O(n (K + g) 2^g)$ time and is FPT with respect to the group size parameter.
\end{theorem}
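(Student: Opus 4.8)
The plan is to reduce the maximization to a covering problem over the group $\mathcal{G}$ and to solve it by a knapsack-style dynamic program whose state space is $\{0,\dots,n\}\times\{0,\dots,K\}\times 2^{[g]}$. First I would observe that, under the assumption $m=1$, the proportionality of a package depends only on the \emph{union of like-sets} of its items. For each item $i$ set $L_i \vcentcolon= \{u \in \mathcal{G} \mid u \text{ likes } i\} \subseteq [g]$, i.e.\ the set of group members whose rating $\boldR_{ui}$ lies in their personal top-$\Delta$\%. Then $\mathcal{P}$ is $1$-proportional for $u$ iff $u \in \bigcup_{i\in\mathcal{P}} L_i$, so $\text{proportionality}(\mathcal{P}) = \bigl|\bigcup_{i\in\mathcal{P}} L_i\bigr|$. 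Computing all the $L_i$ is cheap: for each of the $g$ members run a linear-time selection to find the rating threshold of rank $\lceil \Delta n/100\rceil$, then a single scan marks the liked items and writes bit $u$ into the $g$-bit word $L_i$; this costs $O(ng)$ in total.

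Next I would set up the DP. Let $\mathrm{dp}[i][k][T] \in \{\text{true},\text{false}\}$ be true iff there is a package $\mathcal{P}\subseteq\{1,\dots,i\}$ with $|\mathcal{P}|=k$ and $\bigcup_{\ell\in\mathcal{P}} L_\ell = T$. The base case is $\mathrm{dp}[0][0][\emptyset]=\text{true}$ with all other $\mathrm{dp}[0][\cdot][\cdot]$ false. For $i \ge 1$ a package either omits item $i$, which gives $\mathrm{dp}[i-1][k][T]$, or includes it, in which case deleting $i$ leaves a size-$(k-1)$ package over $\{1,\dots,i-1\}$ with union $T'$ for some $T'$ with $T'\cup L_i = T$; hence $\mathrm{dp}[i][k][T] = \mathrm{dp}[i-1][k][T] \vee \bigvee_{T'\,:\,T'\cup L_i = T}\mathrm{dp}[i-1][k-1][T']$. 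Rather than enumerate preimages of $T\mapsto T\cup L_i$, I would compute this in the push direction: copy layer $i-1$ into layer $i$, then for every $T'$ with $\mathrm{dp}[i-1][k-1][T']$ true, set $\mathrm{dp}[i][k][T'\cup L_i]$ true (one may instead update a single layer with $k$ running from $K$ down to $1$ to save memory). The answer to Problem~2 is then $\max\{|T| \mid \mathrm{dp}[n][K][T] = \text{true}\}$ (vacuous when $n<K$, since $\mathcal{I}^{(K)}=\emptyset$): every feasible exact-union $T$ is realized by some size-$K$ package whose proportionality is $|T|$, and conversely every size-$K$ package contributes its own union.

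For the running time I would account per item $i$. The observation that matches the claimed bound is that $T'\cup L_i$ does not depend on $k$, so I first tabulate $U(T')\vcentcolon= T'\cup L_i$ for all $2^g$ subsets in $O(g 2^g)$ time, after which each of the $O(K 2^g)$ push updates and the $O(K 2^g)$ copy operations is $O(1)$. Thus each item costs $O((K+g)2^g)$, giving $O(n(K+g)2^g)$ overall; the $O(ng)$ preprocessing and the $O(g 2^g)$ final scan over subsets are dominated. Since we may assume $K\le n$ (otherwise the problem is vacuous), $n(K+g)$ is polynomial in the input size $N$, so the bound has the form $\mathrm{Poly}(N)\,f(g)$ with $f(g)=2^g$, which establishes FPT membership with respect to the group-size parameter.

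I expect the only genuinely nontrivial step to be the modeling insight of the first paragraph — recognizing that the apparently $\binom{n}{K}$-size search collapses because proportionality factors through the at most $2^g$ possible like-set unions; once that is in hand the DP is routine. The remaining care is purely bookkeeping: separating the $O(g 2^g)$ cost of the $k$-independent OR table from the $O(K 2^g)$ cost of the count transitions, and performing the top-$\Delta$\% preprocessing within the $O(ng)$ budget by using linear-time selection rather than sorting.
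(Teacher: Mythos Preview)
Your proposal is correct and essentially identical to the paper's proof: both observe that $1$-proportionality of $\mathcal{P}$ equals $\bigl|\bigcup_{i\in\mathcal{P}} L_i\bigr|$ (the paper writes $\mathcal{S}_i$ for your $L_i$), set up the same Boolean DP over $\{0,\dots,n\}\times\{0,\dots,K\}\times 2^{\mathcal{G}}$ tracking the exact union of like-sets, and implement it in the push direction to get the $O(n(K+g)2^g)$ bound. Your accounting for the $(K+g)$ factor---precomputing $T'\mapsto T'\cup L_i$ once in $O(g2^g)$ and then doing $O(K2^g)$ constant-time updates---is slightly more explicit than the paper's, but the underlying argument is the same.
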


\begin{proof}
For item $i \in \mathcal{I}$, integer $k \in \{0\} \cup [K]$, and subset $\mathcal{H} \subset \mathcal{G}$ of group $\mathcal{G}$, let $\boldB_{i, k, \mathcal{H}}$ be a Boolean variable that indicates whether there exists a package $\mathcal{P} \subset [i]^{(k)}$ that is liked by all users in $\mathcal{H}$. Let $\mathcal{S}_i \subset \mathcal{G}$ be the set of users that like item $i$. We show that dynamic programming efficiently computes the values of $\boldB$ in a recursive manner. Specifically, the following recursive relations hold:
\begin{align*}
    \boldB_{i, 0, \emptyset} &= \textbf{True} \quad (0 \le i \le n), \\
    \boldB_{i, 0, \mathcal{H}} &= \textbf{False} \quad (0 \le i \le n, \mathcal{H} \neq \emptyset), \\
    \boldB_{0, k, \mathcal{H}} &= \textbf{False} \quad (1 \le k \le K), \\ 
    \boldB_{i, k, \mathcal{H}} &= \boldB_{i-1, k, \mathcal{H}}+ \sum_{\mathcal{F}\colon \mathcal{F} \cup \mathcal{S}_i = \mathcal{H}} \boldB_{i-1, k-1, \mathcal{F}} \quad(1 \le i \le n, 1 \le k \le K),
\end{align*}
where $+$ and $\sum$ denote Boolean OR. The first three equations follow from the definition. The last equation holds because there exists a package $\mathcal{P} \subseteq [i]^{(k)}$ that is liked by all users in $\mathcal{H}$ if (1) there already exists a package $\mathcal{P} \subseteq [i-1]^{(k)}$ that is liked by all users in $\mathcal{H}$ or (2) there exists a package $\mathcal{P} \subseteq [i-1]^{(k-1)}$ that is liked by all users in $\mathcal{F}$ for some subset $\mathcal{F}$, and the remaining users $\mathcal{H} \backslash \mathcal{F}$ like item $i$. In the latter case, the size of the package increases by one, thereby increasing index $k$. Although the summation in the fourth equation may be exponentially large, only one term is included on average because only one $\mathcal{H}$ satisfies $\mathcal{F} \cup \mathcal{S}_i = \mathcal{H}$ for each $\mathcal{F}$. Algorithm \ref{algo: max} shows the pseudo code. For a fixed $\mathcal{H}$, it is cumbersome to iterate $\mathcal{F}$ that satisfies $\mathcal{F} \cup \mathcal{S}_i = \mathcal{H}$ as in the formula above. To handle it, we iterate $\mathcal{F}$ instead of $\mathcal{H}$ in the code, thereby, it becomes clear where to add $\boldB_{i-1, k, \mathcal{F}}$, i.e., to $\boldB_{i, k+1, \mathcal{F} \cup \mathcal{S}_i}$. In Line 3, the variables are initialized as described above. In Lines 4--11, the values of $\boldB$ are computed in a bottom-up manner. In Line 12, the maximum satisfied set is computed. The computational bottleneck of this algorithm is the triple loop in Lines 4--11. If $\mathcal{H}$ and $\mathcal{S}_i$ are represented by Boolean arrays, $\mathcal{H} \cup \mathcal{S}_{i+1}$ can be computed in $O(g)$ time. Hence, the time complexity of this algorithm is $O(n(K + g)2^g)$.
\end{proof}

It should be noted that previous works proved the problem is NP-hard \cite{serbos2017fairness}, concluded that it was not efficiently solvable, and resorted to approximations. However, the above theorem shows the problem can be solved efficiently and \emph{exactly} when the group size is small, which is the case in previous works \cite{serbos2017fairness, Lin2017fairness, kaya2020ensuring}. 

Maximization of the envy-freeness is also solvable in $O(n (K + g) 2^g)$ time by replacing $\mathcal{S}_i$ with the set of envy-free users for $i$.

\setlength{\textfloatsep}{5pt}
\begin{algorithm2e}[t]
\caption{Maximizing Fairness}
\label{algo: max}
\DontPrintSemicolon 
\nl\KwData{Group $\mathcal{G}$, Package size $K$, Rating matrix $\boldR \in \mathbb{R}_{\ge 0}^{n_u \times n}$.}
\nl\KwResult{Maximum proportionality.}
    \nl $\boldB_{i, k, \mathcal{H}} \leftarrow \begin{cases}
        \textbf{True} & (k = 0, \mathcal{H} = \emptyset) \\
        \textbf{False} & (\text{otherwise})
    \end{cases}$ \;
    \nl \For{$i\gets1$ \KwTo $n$}{
    \nl     $\mathcal{S}_{i} \leftarrow \{ u \in \mathcal{G} \mid \boldR_{ui} \text{ is ranked in the top-}\Delta \text{ \% in }\boldR_{u\colon}\}$ \;
    \nl     \For{$\mathcal{F} \subseteq \mathcal{G}$}{
    \nl         $\mathcal{H}\leftarrow \mathcal{F} \cup \mathcal{S}_{i}$  \tcp*{Add item $i$}
    \nl         \For{$k\gets0$ \KwTo $K$}{
    \nl             \If{$k \le K - 1$}{
    \nl                 $\boldB_{i, k+1, \mathcal{H}} \leftarrow \boldB_{i, k+1, \mathcal{H}} + \boldB_{i-1, k, \mathcal{F}}$ \;
                    }
    \nl             $\boldB_{i, k, \mathcal{F}} \leftarrow \boldB_{i, k, \mathcal{F}} + \boldB_{i-1, k, \mathcal{F}}$ \;
                }
            }
        }
    \nl \textbf{return} $\max \{|\mathcal{H}| \mid \mathcal{H} \subset \mathcal{G}, \boldB_{n, K, \mathcal{H}} = \textbf{True}\}$ \;
\end{algorithm2e}

\subsection{Fair Package Enumeration}

This section describes our proposed method \textsc{FAPE} (\underbar{FA}ir \underbar{P}ackage \underbar{E}numeration). We consider the proportionality first. Thus, the ZDD we construct in the following represents all packages with proportionality at least $\tau$. i.e., $\{\mathcal{P} \in \mathcal{I}^{(K)} \mid \text{proportionality(} \mathcal{P} \text{)} \ge \tau\}$. 

The ZDD we consider in this section has the same structure as the DP table $\boldB$ used in the proof of Theorem \ref{thm: max}. Specifically, let the vertex set be $\mathcal{V} = \{\mathfrak{x}_{i, k, \mathcal{H}} \mid i \in \mathcal{I}, k \in \{0, 1, \cdots, K\}, \mathcal{H} \subseteq \mathcal{G} \} \cup \{\bot, \top\}$. The label of node $\mathfrak{x}_{i, k, \mathcal{H}}$ is item $i$. The root node is $\mathfrak{x}_{1, 0, \emptyset}$. Intuitively, $\mathfrak{x}_{i, k, \mathcal{H}}$ represents a set of packages that contain $k$ items in $[i]$ and are liked by $\mathcal{H}$. We assume that the all undefined nodes (e.g., $\mathfrak{x}_{0, K+1, \emptyset}$) denote the bottom node. We construct the ZDD as follows. \textbf{Internal states:} Recall that $\mathcal{S}_i \subseteq \mathcal{G}$ is the set of users that like item $i$. For each $i \in [n-1]$, the $0$-edge of node $\mathfrak{x}_{i, k, \mathcal{H}}$ points to $\mathfrak{x}_{i+1, k, \mathcal{H}}$ because the set of satisfied members does not change if an item is not inserted. The $1$-edge points to $\mathfrak{x}_{i, k+1, \mathcal{S}_i \cup \mathcal{H}}$ because members in $\mathcal{S}_i$ are newly satisfied by item $i$ and the size of the package increases by $1$. \textbf{Last states:} The $0$-edge of node $\mathfrak{x}_{n, k, \mathcal{H}}$ points to $\top$ if $|\mathcal{H}| \ge \tau$ and $k = K$ and points to $\bot$ otherwise by the definition of the solutions. Recall that package $\mathcal{P}$ is included (i.e., connected to $\top$) if the size $|\mathcal{P}| = k$ is $K$ and the number of satisfied members $\mathcal{H}$ is at least $\tau$. In the $1$-edge case, the sizes of the packages are incremented and $\mathcal{S}_n$ is newly satisfied. Thus, the $1$-edge of node $\mathfrak{x}_{n, k, \mathcal{H}}$ points to $\top$ if $|\mathcal{H} \cup \mathcal{S}_n| \ge \tau$ and $k = K-1$ and points to $\bot$ otherwise.

As in case of Theorem \ref{thm: max}, for $i \in [n]$ and $\mathcal{P} \subseteq [i]$, package $\mathcal{P}$ contains $k$ items and is liked by all users in $\mathcal{H}$ if the corresponding path reaches $\mathfrak{x}_{i+1, k, \mathcal{H}}$. Therefore, a path from the root reaches the top node if and only if the corresponding package is liked by at least $\tau$ users and contains exactly $K$ items. The size of this ZDD is $n (K+1) 2^g$ by construction.

The ZDD described above is not reduced. One way to obtain the reduced ZDD is to construct the ZDD described above and then reduce it. However, this two-step method may produce redundant nodes in the intermediate step and is memory inefficient. By contrast, \textsc{FAPE} constructs the reduced ZDD directly all at once, from bottom to top, avoiding duplicate (Figure \ref{fig: reduction} (a)) and redundant (Figure \ref{fig: reduction} (b)) nodes.

\setlength{\textfloatsep}{5pt}
\begin{algorithm2e}[t]
\caption{\textsc{FAPE}}
\label{algo: enum}
\DontPrintSemicolon 
\nl\KwData{Group $\mathcal{G}$, Package size $K$, Rating matrix $\boldR \in \mathbb{R}_{\ge 0}^{n_u \times n}$, threshold $\tau \in \mathbb{Z}_+$.}
\nl\KwResult{All packages with proportionality at least $\tau$.}
    \nl \For{$i \gets n$ \KwTo $1$}{
    \nl     $\mathcal{S}_i \leftarrow \{ u \in \mathcal{G} \mid \boldR_{ui} \text{ is ranked in top-}\Delta \text{ \% in }\boldR_{u\colon}\}$ \;
    \nl     \For{$\mathcal{H} \subseteq \mathcal{G}$}{
    \nl         $\mathcal{H}' \leftarrow \mathcal{H} \cup \mathcal{S}_{i}$  \tcp*{Add item $i$}
    \nl         \For{$k\gets0$ \KwTo $K$}{
    \nl             \uIf{$i = n$}{
    \nl                 hi $\leftarrow \top \textbf{ if } |\mathcal{H}'| \ge \tau \textbf{ and } k = K-1 \textbf{ else } \bot$ \;
    \nl                 low $\leftarrow \top \textbf{ if } |\mathcal{H}| \ge \tau \textbf{ and } k = K \textbf{ else } \bot$ \;
                    }
    \nl             \uElseIf{$k = K$}{
    \nl                 hi $\leftarrow \bot;$ ~low $\leftarrow \mathfrak{s}_{i+1, k, \mathcal{H}}$ \;
                    }
    \nl             \Else{
    \nl                 hi $\leftarrow \mathfrak{s}_{i+1, k+1, \mathcal{H}'};$ ~low $\leftarrow \mathfrak{s}_{i+1, k, \mathcal{H}}$ \;
                    }
    \nl             \uIf{\textup{hi } $= \bot$}{
    \nl                 $\mathfrak{s}_{i, k, \mathcal{H}} \leftarrow$ low \tcp*{Skip (Fig. \ref{fig: reduction} (b))}
                    }
    \nl             \uElseIf{$\mathfrak{p}_{i, \textup{hi}, \textup{low}} \textup{ \textbf{is not None}}$}{
    \nl                 \tcp{Duplicate node found.}
    \nl                 $\mathfrak{s}_{i, k, \mathcal{H}} \leftarrow \mathfrak{p}_{i, \textup{hi}, \textup{low}}$ \tcp*{Merge (Fig. \ref{fig: reduction} (a))}
                    }
    \nl             \Else{
    \nl                 \tcp{No duplicate nodes found.}
    \nl                 $\mathfrak{p}_{i, \textup{hi}, \textup{low}} \leftarrow \mathfrak{x}_{i, k, \mathcal{H}}; ~\mathfrak{s}_{i, k, \mathcal{H}} \leftarrow \mathfrak{x}_{i, k, \mathcal{H}}$ \;
    \nl                 $\mathfrak{x}_{i, k, \mathcal{H}}.\text{edges} \leftarrow (\text{hi}, \text{low})$ \;
                    }
                }
            }
        }
    \nl \textbf{return} $\mathfrak{s}_{1, 0, \emptyset}$ \;
\end{algorithm2e}

Algorithm \ref{algo: enum} shows the pseudo-code. hi and low stand for $1$- and $0$-edges, respectively. Nodes $\mathfrak{x}_{i, k, \mathcal{H}}$ are divided into equivalence classes induced by the reduction operations (Figure \ref{fig: reduction}). Merged nodes belong to the same equivalence class. A skipped node is identified with the node pointed by the $0$-edge. $\mathfrak{s}_{i, k, \mathcal{H}}$ is the representative node of the equivalence class to which node $\mathfrak{x}_{i, k, \mathcal{H}}$ belongs. Recall that when two states $\mathfrak{x}_{i, k, \mathcal{H}}$ and $\mathfrak{x}_{i, k', \mathcal{H'}}$, for some $k, k', \mathcal{H}, \mathcal{H}'$, point to the same successors (hi, low), they are identified by the reduction operation (Figure \ref{fig: reduction} (a)). $\mathfrak{p}_{i, \text{hi}, \text{low}}$ is the representative node that points to (hi, low) with index $i$. We use the first state that points to (hi, low) as the representative node $\mathfrak{p}_{i, \text{hi}, \text{low}}$ and identify the following states that point to the same successors with the first one $\mathfrak{p}_{i, \text{hi}, \text{low}}$.

The outermost loop in Line 3 iterates item $i$ in the descending order, i.e., \textsc{FAPE} constructs the reduced ZDD from bottom to top. \textbf{Last states:} In Lines 8--10, we determine the successors of the last variable nodes. If the resulting packages contain exactly $K$ items and are liked by at least $\tau$ users, the link points to the top node, and otherwise points to the bottom node. See also the corresponding descriptions in the redundant ZDDs (``Last states'' on the left column). \textbf{Border states:} In Lines 11--12, we determine the successors of the nodes that contain $K$ items already. If a new item is inserted into such a package (i.e., $1$-edge), the resulting package contains more than $K$ items, and thus, should point to the bottom node. The $0$-edges are determined similarly as the internal states. \textbf{Internal states:} In Lines 13--14, we determine the successors of other nodes. See also the corresponding descriptions in the redundant ZDDs (``Internal states'' on the previous page). \textbf{Skipping:} In Lines 15--16, if the $1$-edge points to the bottom node, the node is skipped and identified with the node pointed by the $0$-edge. \textbf{Merging:} In Lines 17--19, if there already exists a node that points to the same successors, we merge the current node to the existing node. Otherwise, a new node is created in Lines 21--23. We use a hashmap to implement $\mathfrak{p}$. A bucket sort-based algorithm \cite{sieling1993reduction, knuth2009art} can also be employed.

Clearly, the construed ZDD is reduced. There are no nodes whose $1$-edges point to the bottom node owing to Line 15. There are no two nodes that point to the same successor with the same index because Line 17 does not allow creating such nodes. In addition, clearly, the resulting ZDD represents all packages with proportionality at least $\tau$ and contains exactly $K$ items owing to the same reason that the (non-reduced) ZDD introduced at the beginning of this subsection. The time complexity of \textsc{FAPE} is $O(n (K + g) 2^g)$ because the outermost loop iterates $n$ times, the middle loop in Line 5 iterates $2^g$ times, the innermost loop in Line 7 iterates $K + 1$ times, and the union operation in Line 6 takes $O(g)$ time. Therefore, the following theorem holds.

\begin{theorem} \label{thm: enum}
The enumerating fair packages problem can be solved in $O(n (K + g) 2^g)$ time.
\end{theorem}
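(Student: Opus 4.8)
The plan is to reduce the statement to two claims: \textbf{(i)} the layered diagram constructed above, read as a ZDD over the ordered universe $\mathcal{I}$, represents exactly the family $\{\mathcal{P} \in \mathcal{I}^{(K)} \mid \text{proportionality}(\mathcal{P}) \ge \tau\}$; and \textbf{(ii)} Algorithm~\ref{algo: enum} outputs the \emph{reduced} form of that diagram and does so within the claimed time bound. Combining (i), (ii) and the uniqueness of the reduced ZDD of a given family then yields the theorem, since ``solving'' the problem means producing a compressed description of this family.

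For (i) I would prove, by induction on the layer index $i$, the invariant that for every $\mathcal{P} \subseteq [i]$ the ZDD path from the root $\mathfrak{x}_{1,0,\emptyset}$ that takes a $1$-edge on exactly the levels in $\mathcal{P}$ arrives at node $\mathfrak{x}_{i+1, |\mathcal{P}|, \mathcal{H}_\mathcal{P}}$, where $\mathcal{H}_\mathcal{P} \define \bigcup_{j \in \mathcal{P}} \mathcal{S}_j$ is, because $m = 1$, precisely the set of members for whom $\mathcal{P}$ is proportional. The base case $\mathcal{P} = \emptyset$ is immediate from the choice of root, and the inductive step is exactly the two edge rules: a $0$-edge at level $i$ keeps $(k, \mathcal{H})$ (item $i$ omitted), while a $1$-edge increments $k$ and unions in $\mathcal{S}_i$ (item $i$ chosen, newly satisfying $\mathcal{S}_i$), with the $1$-edge leading to $\bot$ when $k = K$ already, which is correct since then $|\mathcal{P}| > K$. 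Plugging this invariant into the terminal rules at level $n$ --- reach $\top$ via the $0$-edge iff $|\mathcal{H}| \ge \tau$ and $k = K$, via the $1$-edge iff $|\mathcal{H} \cup \mathcal{S}_n| \ge \tau$ and $k = K-1$ --- shows the root-to-$\top$ paths are in bijection with the $K$-subsets $\mathcal{P}$ satisfying $|\mathcal{H}_\mathcal{P}| \ge \tau$, i.e.\ with the fair packages. This is the substantive part of the correctness argument, and it mirrors the correctness of the DP in the proof of Theorem~\ref{thm: max}.

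For (ii) I would observe that Algorithm~\ref{algo: enum} traverses the levels from $i = n$ down to $i = 1$ and, at each state $\mathfrak{x}_{i,k,\mathcal{H}}$, first reads off the successors $(\mathrm{hi}, \mathrm{low})$ given by the edge rules of (i) (Lines~8--14) and then carries out, on the fly, the two reductions of Figure~\ref{fig: reduction}: if $\mathrm{hi} = \bot$ it deletes the node and identifies it with $\mathrm{low}$ (Lines~15--16); otherwise it looks up the table $\mathfrak{p}$ to share an already-created node with the same label and the same successors (Lines~17--19), allocating a fresh node only when none exists (Lines~21--23). Both operations preserve the represented family, so the output represents the same family as the unreduced diagram, namely the fair packages; and since the output has no node with a $\bot$-pointing $1$-edge and no two equally labelled nodes with identical successors, it is reduced --- hence, by uniqueness, it is the reduced ZDD of the target family.

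It remains to bound the running time. The three nested loops (Lines~3, 5, 7) range over all triples $(i, k, \mathcal{H})$ with $i \in [n]$, $0 \le k \le K$, $\mathcal{H} \subseteq \mathcal{G}$, i.e.\ $n(K+1)2^g$ iterations, each doing $O(1)$ pointer work plus one lookup/insertion in $\mathfrak{p}$. Keying $\mathfrak{p}$ on the pair of successor identifiers --- which lie among the at most $(K+1)2^g$ nodes of the level immediately below (together with $\top,\bot$) --- makes this $O(1)$ amortized, either via a hash table or, for a worst-case guarantee, by bucket/radix-sorting the nodes within each level as in \cite{sieling1993reduction, knuth2009art}, which costs $O((K+1)2^g)$ per level. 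The only remaining work is forming $\mathcal{H}' = \mathcal{H} \cup \mathcal{S}_i$ once per pair $(i, \mathcal{H})$ (Line~6), in $O(g)$ time with Boolean arrays, contributing $O(n\,g\,2^g)$ overall. Summing gives $O(n(K+1)2^g + n\,g\,2^g) = O(n(K+g)2^g)$, which also dominates the size of the emitted ZDD, as required; the envy-freeness version is identical with $\mathcal{S}_i$ replaced by the set of envy-free users for item $i$. The main obstacle is nothing conceptual --- the diagram construction is already spelled out --- but this last accounting: one must keep the $\mathcal{H} \cup \mathcal{S}_i$ union at the $(i,\mathcal{H})$ level rather than inside the $k$-loop, and keep duplicate detection $O(1)$ per node, to avoid an extra factor of $g$ or of the layer size.
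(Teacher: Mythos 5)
Your proposal is correct and follows essentially the same route as the paper: it builds the layered diagram indexed by $(i,k,\mathcal{H})$, establishes the path invariant exactly as in the DP of Theorem~\ref{thm: max}, reduces on the fly via skipping and merging, and accounts for the $n(K+1)2^g$ states plus the $O(g)$-per-$(i,\mathcal{H})$ union to get $O(n(K+g)2^g)$. Your write-up is somewhat more explicit than the paper's about the induction invariant and about why the reduction operations preserve the represented family, but there is no substantive difference in the argument.
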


\begin{corollary}
The enumerating fair packages problem is FPT with respect to the group size parameter.
\end{corollary}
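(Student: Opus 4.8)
The plan is to obtain the corollary directly from Theorem \ref{thm: enum} by verifying that the running time stated there matches the definition of an FPT algorithm. Recall that an algorithm is FPT with respect to a parameter $k$ if it runs in $O(\text{Poly}(N) f(k))$ time, where $N$ is the input size, the degree of $\text{Poly}(N)$ does not depend on $k$, and $f$ is an arbitrary computable function of $k$.

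First I would pin down the input size. An instance of the enumerating fair packages problem consists of the group $\mathcal{G}$, the package size $K$, the threshold $\tau$, and the rating matrix $\boldR \in \mathbb{R}_{\ge 0}^{n_u \times n}$, so the input size satisfies $N = \Omega(n_u n)$; in particular $n \le N$, $K \le n \le N$ (otherwise $\mathcal{I}^{(K)} = \emptyset$ and the problem is trivial), and $g \le n_u \le N$ since $\mathcal{G} \in \mathcal{U}^{(g)}$.

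Then I would rewrite the bound of Theorem \ref{thm: enum}: $O(n(K+g)2^g) = O\bigl((nK + ng)2^g\bigr) = O(N^2 2^g)$, using $nK \le N^2$ and $ng \le N^2$. Setting $\text{Poly}(N) = N^2$ --- a polynomial whose degree $2$ is independent of $g$ --- and $f(g) = 2^g$, the algorithm \textsc{FAPE} runs in $O(\text{Poly}(N) f(g))$ time, which is precisely an FPT running time with respect to the group size parameter $g$; the analogous construction for envy-freeness (replacing $\mathcal{S}_i$ with the set of envy-free users) runs within the same bound, so running both yields an overall FPT algorithm. There is essentially no obstacle, since the statement is an immediate consequence of Theorem \ref{thm: enum}; the only point worth a remark is that the output may be exponentially large as a set of packages, so one must count it in its compressed ZDD representation --- but its size is $O(n(K+1)2^g)$ by the construction preceding Algorithm \ref{algo: enum}, and is thus already absorbed into the bound of Theorem \ref{thm: enum}.
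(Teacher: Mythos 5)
Your proposal is correct and matches the paper's (implicit) reasoning: the corollary is stated as an immediate consequence of Theorem \ref{thm: enum}, whose $O(n(K+g)2^g)$ bound is exactly of the form $O(\text{Poly}(N)f(g))$ required by the definition of FPT given in the preliminaries. Your additional care in bounding $K$ and $g$ by the input size and in noting that the output is counted in its compressed ZDD form is sound but not a departure from the paper's argument.
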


\begin{corollary}
The size of the reduced ZDD that represents all packages with proportionality at least $\tau$ is $O(n K 2^g)$.
\end{corollary}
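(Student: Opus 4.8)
The plan is to read the bound off the explicit construction that precedes Algorithm~\ref{algo: enum}. Recall that the non-reduced ZDD built there has vertex set $\mathcal{V} = \{\mathfrak{x}_{i,k,\mathcal{H}} \mid i \in \mathcal{I},\, k \in \{0,1,\ldots,K\},\, \mathcal{H} \subseteq \mathcal{G}\} \cup \{\bot,\top\}$, hence exactly $n(K+1)2^g$ branch nodes, as already observed in that paragraph. The first step is to invoke the elementary fact about reduction recalled in Section~\ref{sec: dd}: each reduction operation either merges two duplicate nodes or skips a node whose $1$-edge points to $\bot$, and in both cases the number of branch nodes strictly decreases (or, at worst, stays the same). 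Therefore, passing from the non-reduced ZDD to its reduced form by any sequence of such operations cannot increase the branch-node count, so the reduced ZDD has at most $n(K+1)2^g$ branch nodes.

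The second step is to identify the object in the statement with this reduced ZDD. By the discussion accompanying Theorem~\ref{thm: enum}, the ZDD produced by \textsc{FAPE} is reduced (it has no $1$-edge to $\bot$ and no two index-equal nodes with identical successors) and it represents exactly the family $\{\mathcal{P} \in \mathcal{I}^{(K)} \mid \mathrm{proportionality}(\mathcal{P}) \ge \tau\}$; since the reduced ZDD of a set family is unique, this is precisely the ZDD whose size the corollary asks us to bound. Combining with the previous step, its number of branch nodes is at most $n(K+1)2^g$. Finally, since $K \in \mathbb{Z}_+$ we have $K+1 \le 2K$, so $n(K+1)2^g \le 2nK2^g = O(nK2^g)$, which is the claimed bound.

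The argument is essentially bookkeeping, so I do not expect a genuine obstacle; the only point that needs care is making sure one is entitled to compare against the reduced form — i.e., explicitly using that reduction is size-nonincreasing and that \textsc{FAPE}'s output is in fact the (unique) reduced ZDD — rather than, say, trying to bound the size via the weaker inequality $\|\mathfrak{A}\| \le \sum_{\mathcal{A}\in\mathfrak{A}}|\mathcal{A}|$, which would involve the exponentially large number of packages and give nothing useful here.
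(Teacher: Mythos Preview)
Your proposal is correct and matches the paper's (implicit) argument: the corollary is stated without separate proof and is meant to be read off directly from the explicit construction, which has $n(K+1)2^g$ branch nodes, together with the fact that reduction never increases size. Your write-up simply makes these steps explicit, including the uniqueness of the reduced ZDD and the trivial $K+1 = O(K)$ bound; there is no meaningful difference in approach.
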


Note that the size of the ZDD is bounded by $O(n K 2^g)$, even in the worst case. It is much smaller in practice than the bound due to the pruning stpdf in Lines 15--19 in Algorithm \ref{algo: enum}. Therefore, the filtering operations we will introduce in the next section can be efficiently conducted once the ZDD is constructed.

Enumerating envy-free packages is also solvable in $O(n (K + g) 2^g)$ time by replacing $\mathcal{S}_i$ with the set of users that are envy-free for $i$.

\subsection{Operations} \label{sec: op}

Enumerating all candidates is helpful when the number of candidates is small. However, it is impossible to investigate all candidates by hand when there are countless candidates. The crux of our proposed algorithm is that it can filter items by various operations in such cases. We introduce several examples of such operations.

\vspace{0.05in}
\sloppy \noindent \uline{\textbf{Union and Intersection.}} As we reviewed in Section \ref{sec: dd}, ZDDs support the apply operation, which computes the intersection and union of two ZDDs. Therefore, for any $\tau, \tau'$, we can enumerate $\mathfrak{I}_{\tau, \tau'} = \{ \mathcal{P} \subseteq \mathcal{I} \mid \text{proportionality}(\mathcal{P}) \ge \tau \textbf{ and } \text{envy-freeness}(\mathcal{P}) \ge \tau' \}$, by constructing two ZDDs by \textsc{FAPE} and conducting the apply operation. Similarly, $\{ \mathcal{P} \subseteq \mathcal{I} \mid \text{proportionality}(\mathcal{P}) \ge \tau \textbf{ or } \text{envy-freeness}(\mathcal{P}) \ge \tau' \}$ can be constructed. This is beneficial when either criterion is acceptable. Furthermore, for any $\tau$, we can compute the maximum envy-freeness of the packages with proportionality at least $\tau$ by investigating non-empty $\mathfrak{I}_{\tau, \tau'}$ with maximum $\tau'$. In other words, we can compute the Pareto optimal packages in the sense of having the best envy-freeness for a given proportionality.

We can impose additional constraints by constraint ZDDs and the intersection operation. A typical example is category constraints \cite{serbos2017fairness}. In music playlist recommendations, all items in a package should have the same category for consistency. The constraint ZDD can be constructed by (1) constructing ZDD $\mathfrak{C}_c$ that represents the power sets of items in each category $c$ (see Figure \ref{fig: example} (d)) and (2) merging them by the union operation. $\mathfrak{C}_c$ contains all packages that are composed only of items with category $c$. Thus, the merged ZDD contains all packages that are composed of a single category. In tour recommendation usage, if we set categories based on prefectures or regions, the POIs in a package are constrained to be geometrically close. In movie and tour package recommendations, all items in a package should have different categories to avoid monotony. The constraint ZDD can be constructed by dynamic programming similar to \textsc{FAPE} in this case.
In general, constraint ZDDs are domain-specific, and it may be time-consuming to build them. However, we can re-use them for all groups. Once we build the constraint ZDDs and store them in a database, we can enumerate the filtered packages by running \textsc{FAPE} and the intersection operation with the stored constraint ZDDs.

\vspace{0.05in}
\noindent \uline{\textbf{Fixing Items.}} There may be favorite or disliked items in a group. ZDDs support superset and exclusion operations. Specifically, given an item set $\mathcal{Q} \subseteq \mathcal{I}$ and a ZDD $\mathfrak{A}$, we can build a ZDD that represents supersets $\{ \mathcal{P} \mid \mathcal{Q} \subseteq \mathcal{P} \in \mathfrak{A}\}$ and exclusion $\{ \mathcal{P} \mid \mathcal{P} \in \mathfrak{A}, \mathcal{P} \cap \mathcal{Q} = \emptyset \}$ in $O(\|\mathfrak{A}\| + |\mathcal{Q}|)$ time.

\vspace{0.05in}
\noindent \uline{\textbf{Optimizing Scores.}} In general, fairness is not the only objective. Even if a package is envy-free, it is not helpful if all items are not liked by any users. A natural request is to obtain high total preference $S(\mathcal{P}) = \sum_{i \in \mathcal{P}} \sum_{u \in \mathcal{G}} \boldR_{ui}$. ZDDs can efficiently solve the linear Boolean programming problem \cite{knuth2009art}. Specifically, given a weight $w_i \in \mathbb{R}$ for each item $i \in \mathcal{I}$ and a ZDD $\mathfrak{A}$, we can obtain the maximum weight package $\argmax_{\mathcal{P} \in \mathfrak{A}} \sum_{i \in \mathcal{P}} w_i$ in $O(\|\mathfrak{A}\|)$ time. Top-$k$ packages can be also computed efficiently. Therefore, if we set $w_i = \sum_{u \in \mathcal{G}} \boldR_{ui}$, we can compute the package with the maximum total preference in a ZDD. If we set $w_i = -\text{Var}(\{\boldR_{ui} \mid u \in \mathcal{G})\})$, the resulting package contains items that all users rated similarly, which can be considered as another fairness criterion. We can apply these operations to the ZDDs constructed by \textsc{FAPE} and those filtered by the intersection operations mentioned above. 

\vspace{0.05in}
\noindent \uline{\textbf{Sampling Packages.}} Although optimization is a powerful tool, it outputs an extreme case in a ZDD. Some users may want to determine average packages to know the properties of the ZDD at hand. ZDDs support uniform and weighted sampling. Specifically, given ZDD $\mathfrak{A}$, uniform sampling outputs each package with probability $1/|\mathfrak{A}|$. Given a weight $w_i \in \mathbb{R}$ for each item $i \in \mathcal{I}$, weighted sampling outputs package $\mathcal{P} \in \mathfrak{A}$ with probability proportional to $\sum_{i \in \mathcal{P}} w_i$. These operations enable investigating ZDDs even if their sizes are large. Note that even uniform sampling may be satisfactory because the ZDDs constructed by \textsc{FAPE} contain only fair packages.

\subsection{Extensions} \label{sec: extend}

In this section, we describe how to extend our method to $m > 1$ briefly. For $i \in \mathcal{I}$, $k \in \{0, 1, \cdots, K\}$, $\mathcal{H} \in \{0, 1, \cdots, m\}^\mathcal{G}$, let $\mathfrak{x}_{i, k, \mathcal{H}}$ represent packages $\mathcal{P} \in \mathcal{I}^{(K)}$ such that user $u \in \mathcal{G}$ likes $\mathcal{H}_u$ items in $\mathcal{P}$. $\mathcal{H}_u = m$ means at least $m$ items. Then, the reduced ZDD can be constructed as \textsc{FAPE}, where $\mathcal{H}$ is not a Boolean array but a counter array. The time complexity is $O(n (K + g) (m + 1)^g)$. Therefore, the problem is still FTP if it is parameterized with both $g$ and $m$. We note that the ZDDs constructed with different $m$ can be combined by the union and intersection operations for finer-grained fairness.

\section{Experiments}

\begin{figure*}[tb]
\begin{minipage}{0.22\hsize}
\centering
\includegraphics[width=\hsize]{./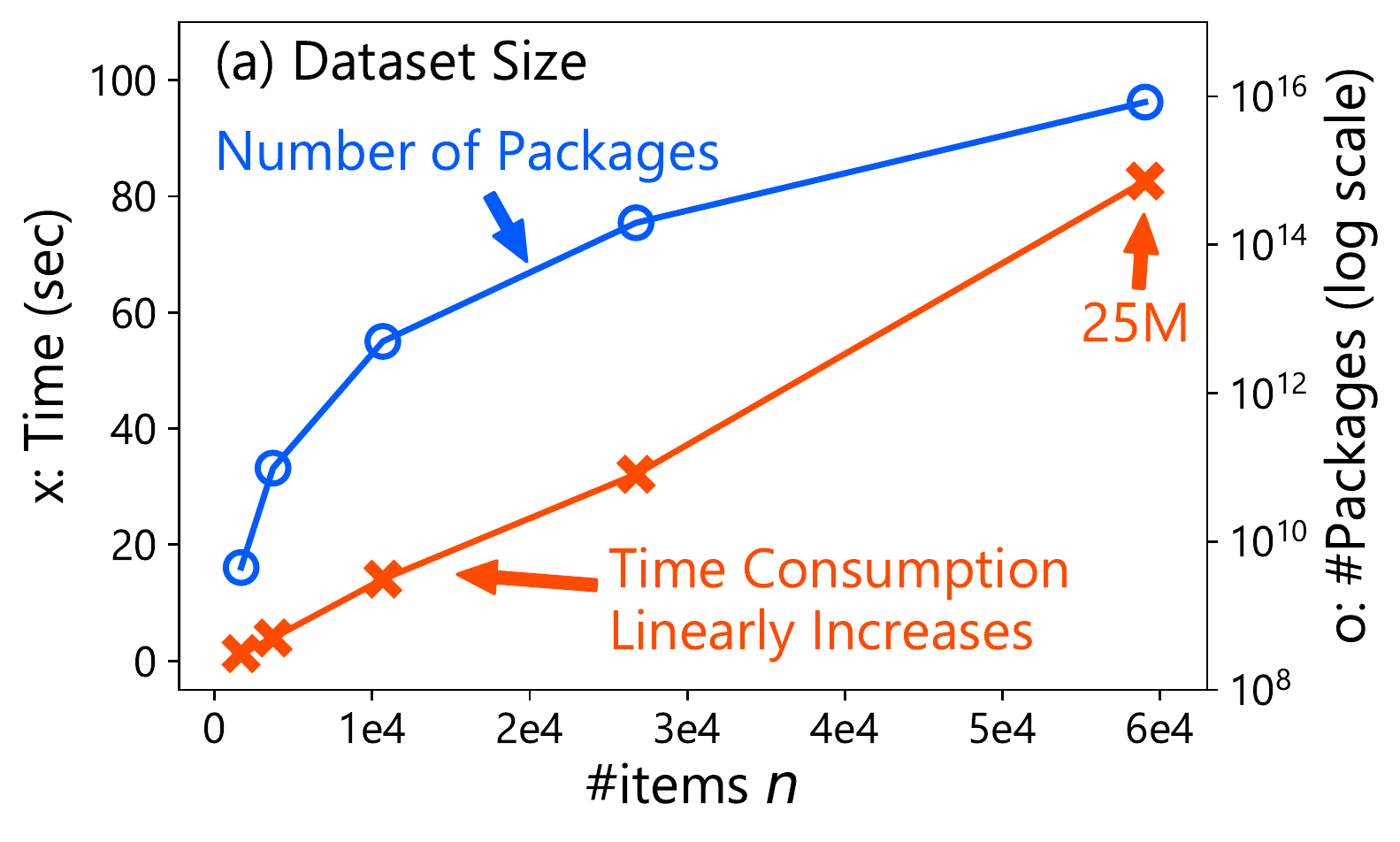}
\end{minipage}
\begin{minipage}{0.22\hsize}
\centering
\includegraphics[width=\hsize]{./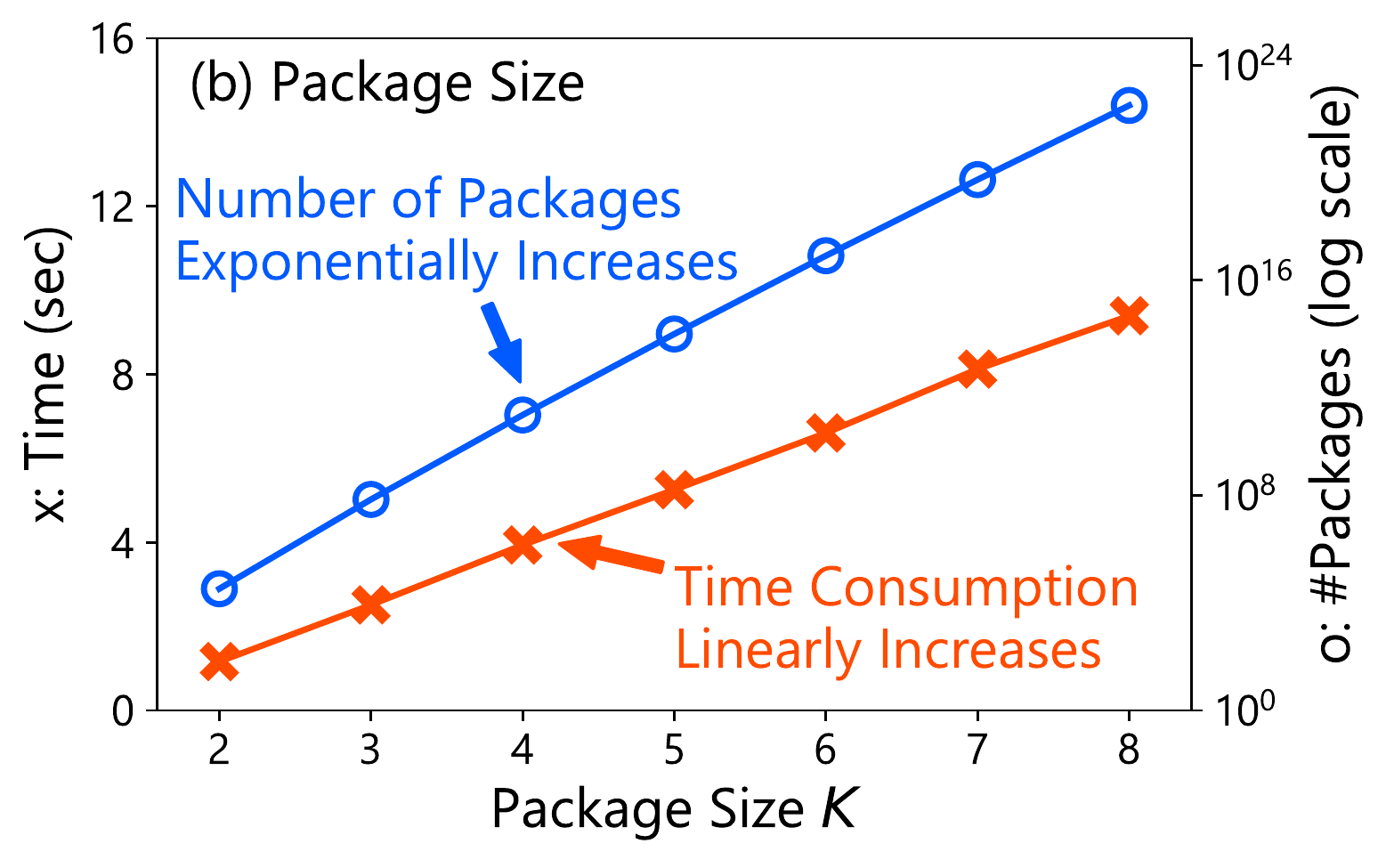}
\end{minipage}
\begin{minipage}{0.22\hsize}
\centering
\includegraphics[width=\hsize]{./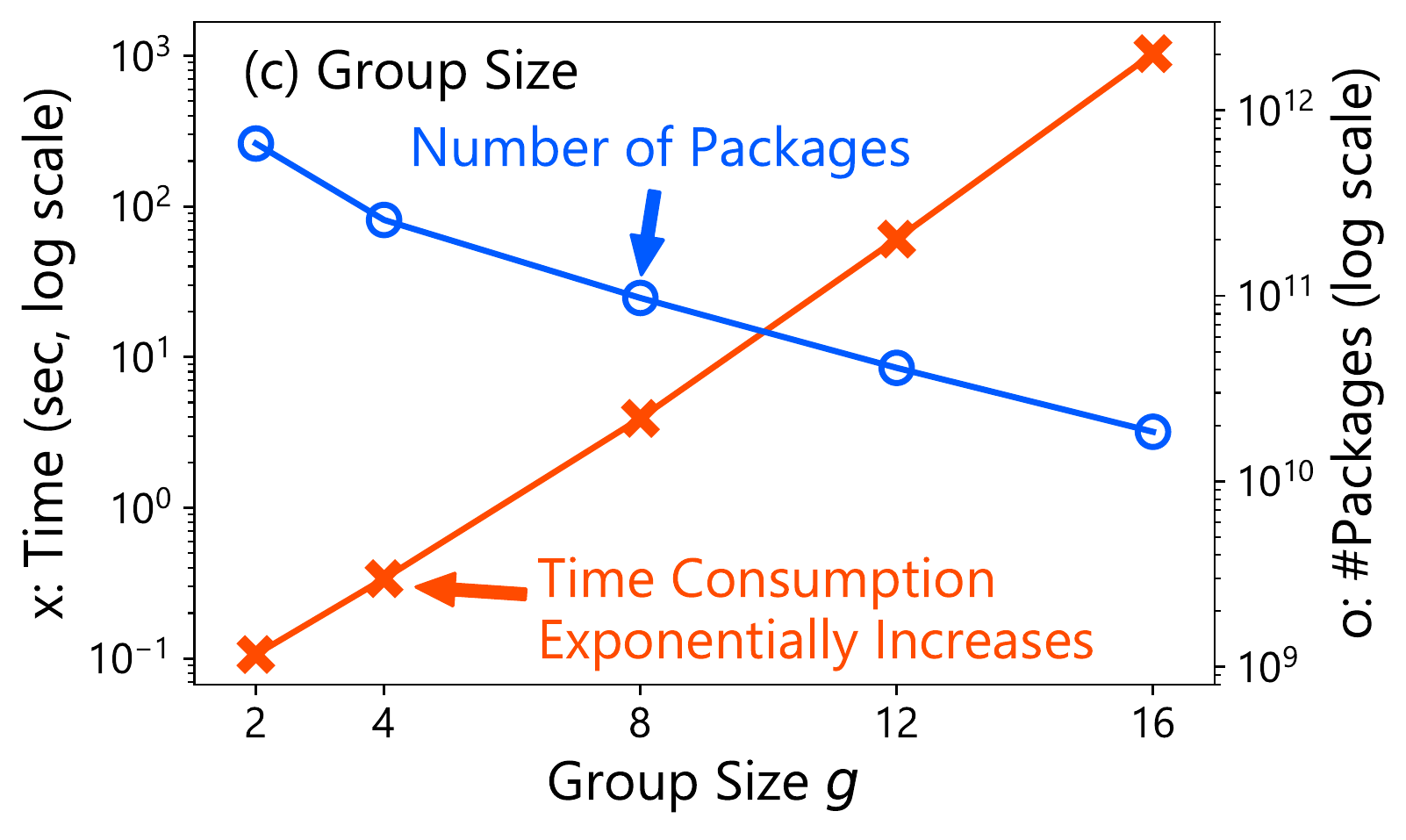}
\end{minipage}
\begin{minipage}{0.22\hsize}
\centering
\includegraphics[width=\hsize]{./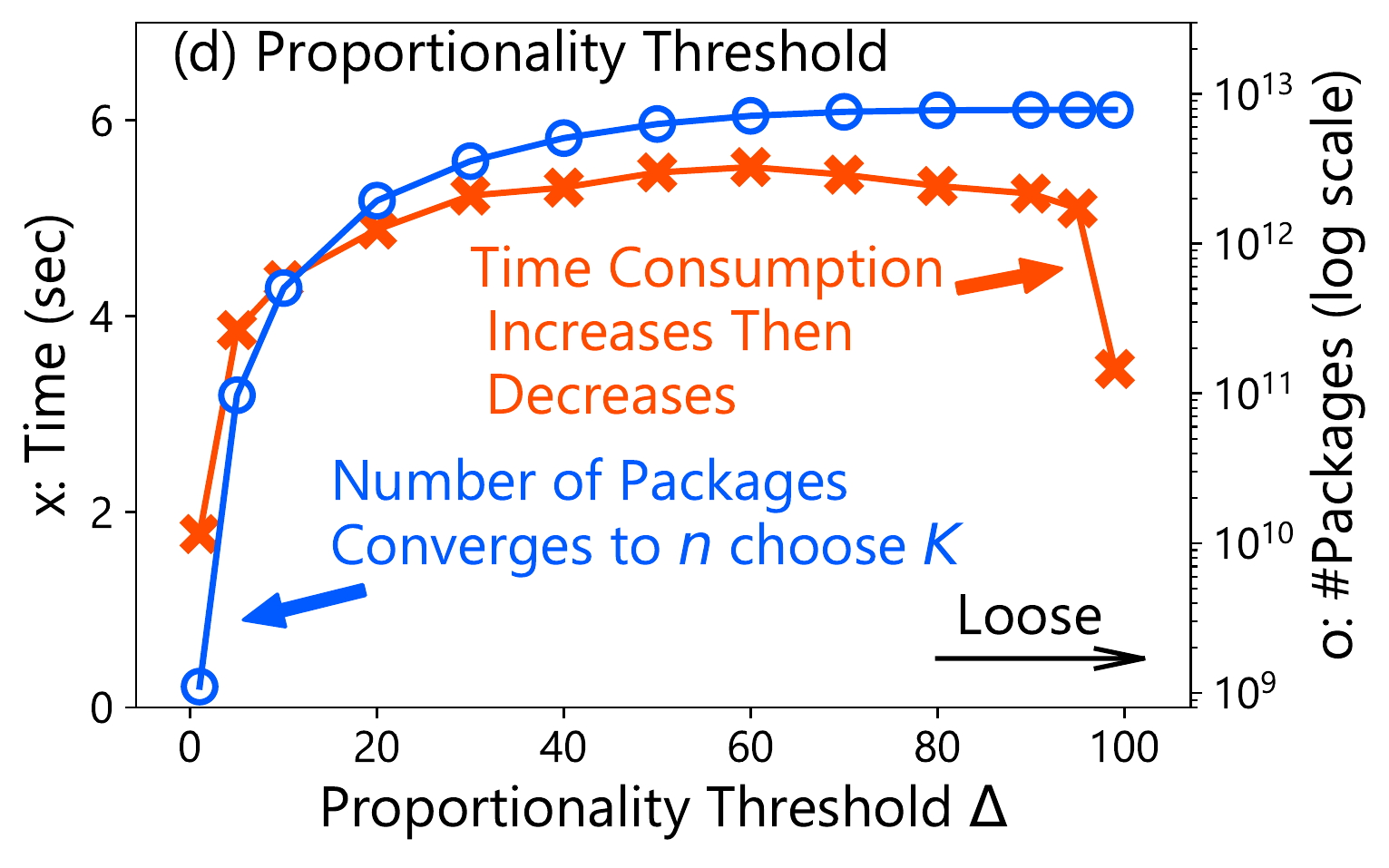}
\end{minipage}
\begin{minipage}{0.22\hsize}
\centering
\includegraphics[width=\hsize]{./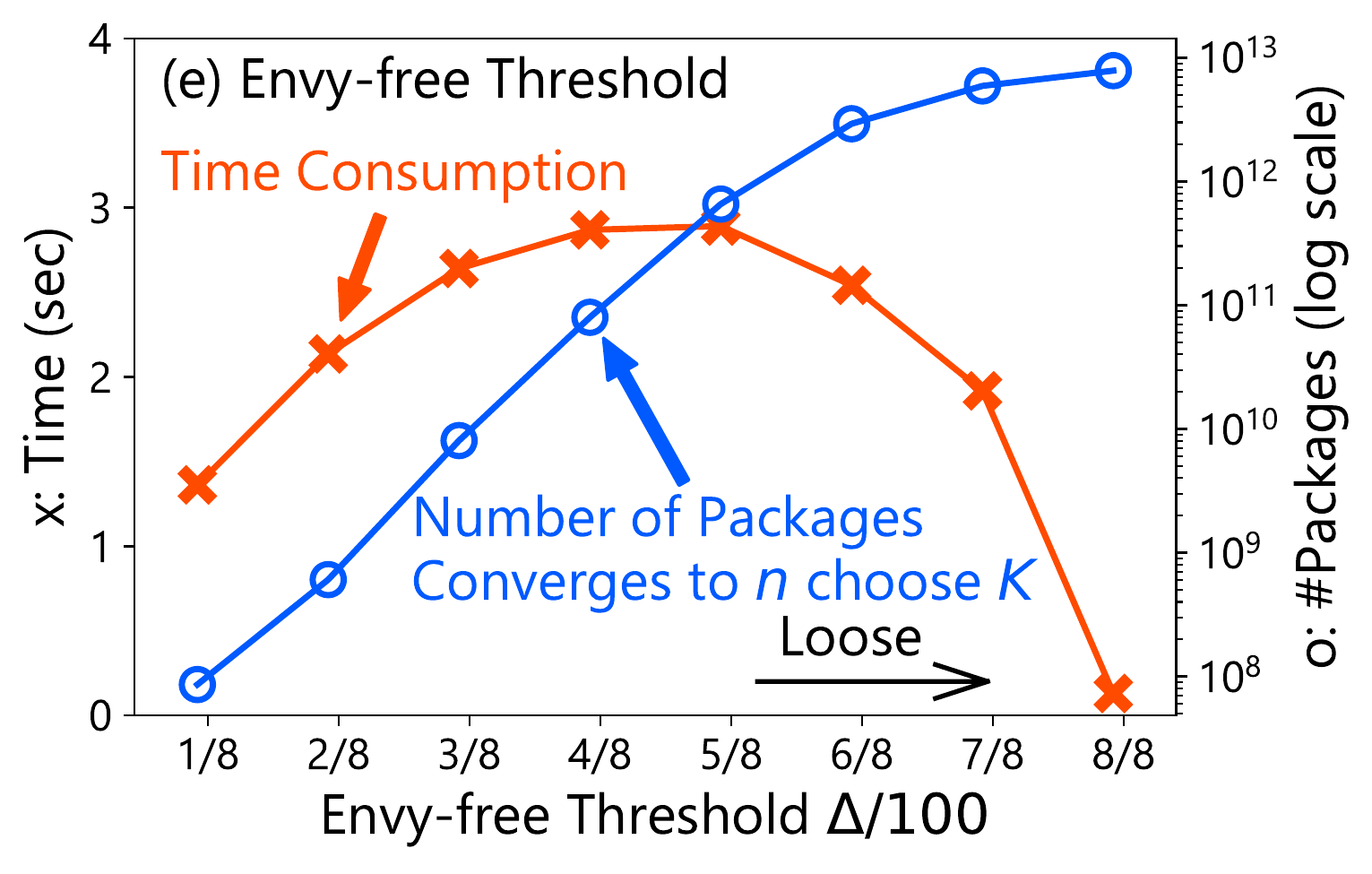}
\end{minipage}
\begin{minipage}{0.22\hsize}
\centering
\includegraphics[width=\hsize]{./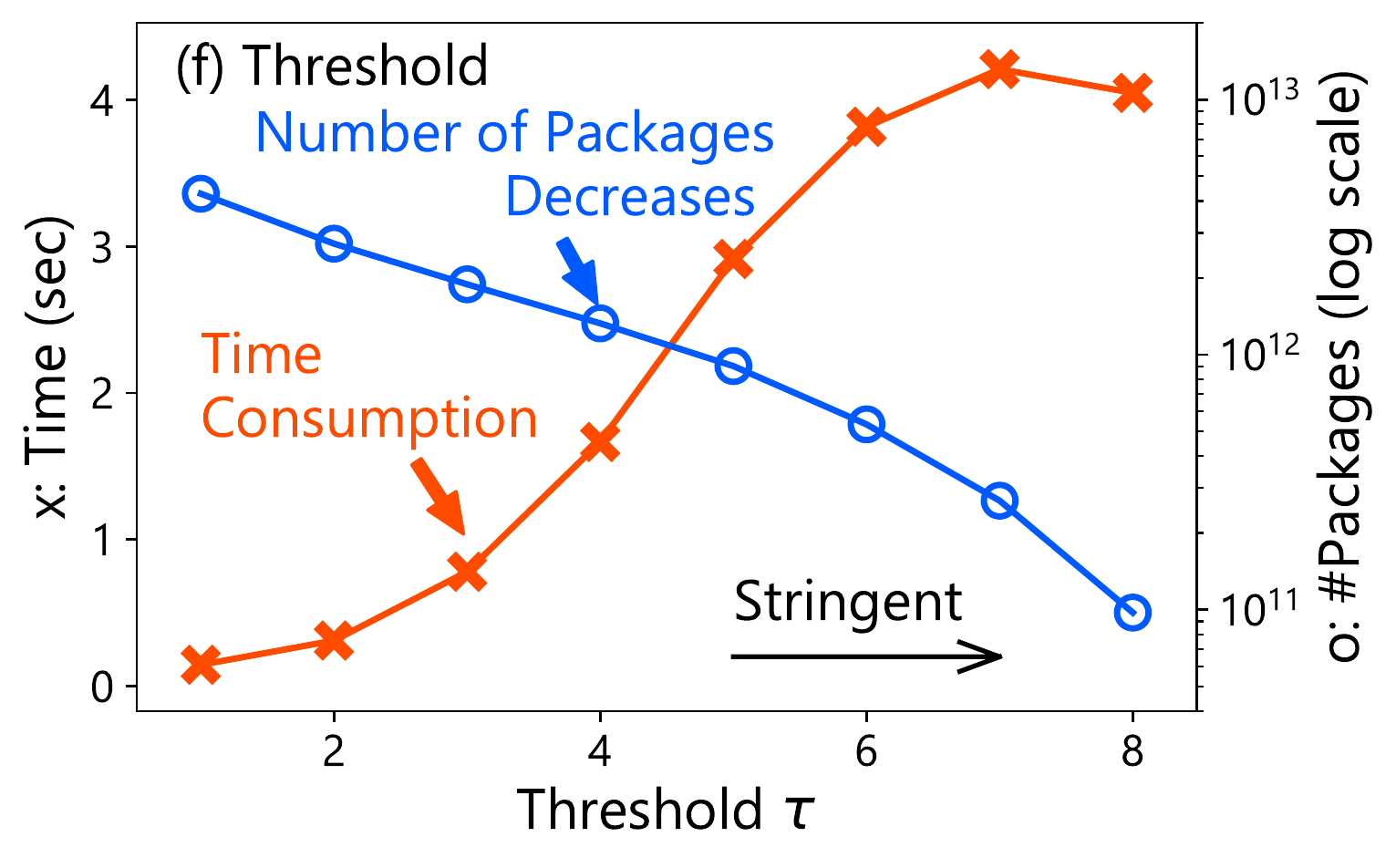}
\end{minipage}
\begin{minipage}{0.22\hsize}
\centering
\includegraphics[width=\hsize]{./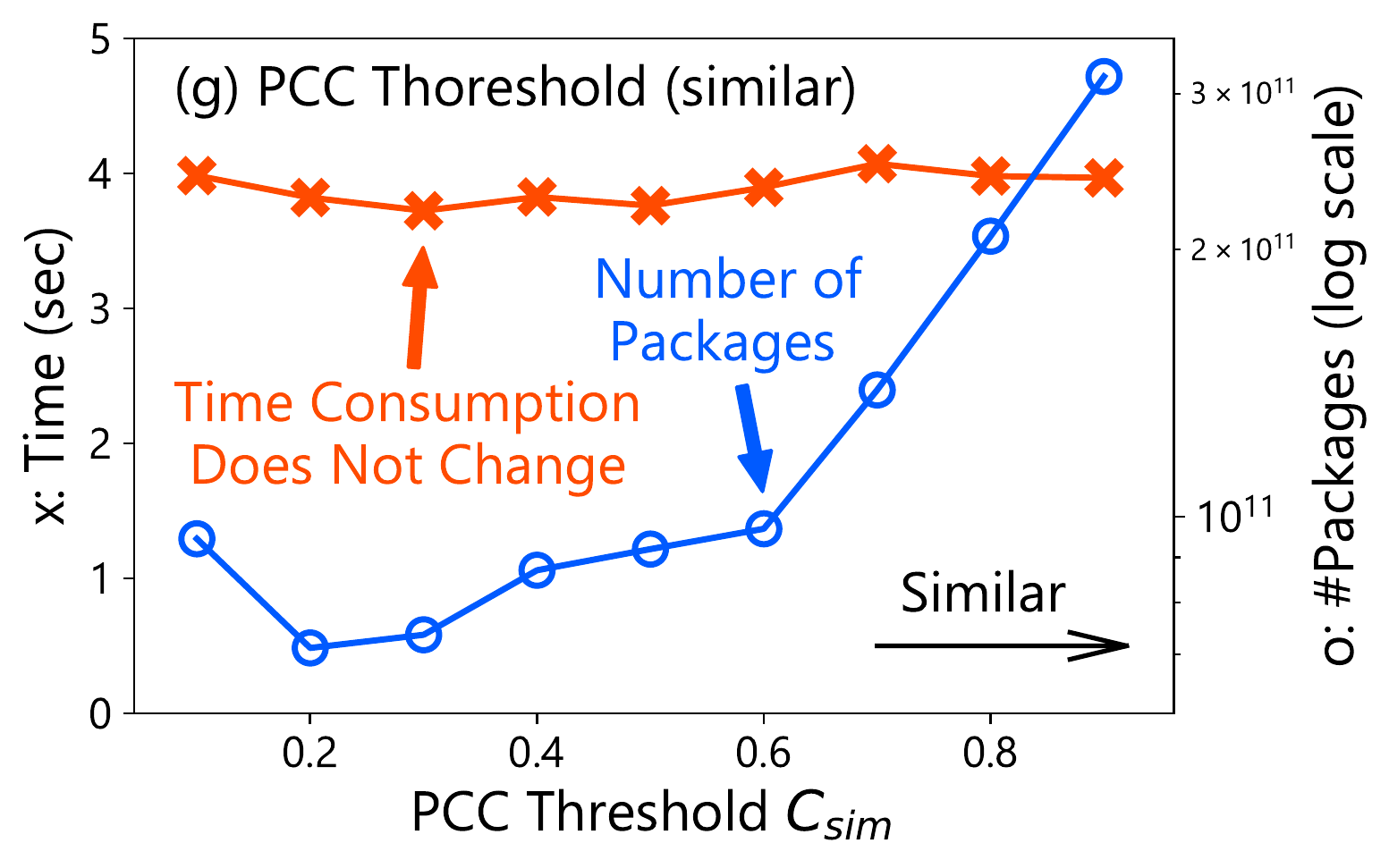}
\end{minipage}
\begin{minipage}{0.22\hsize}
\centering
\includegraphics[width=\hsize]{./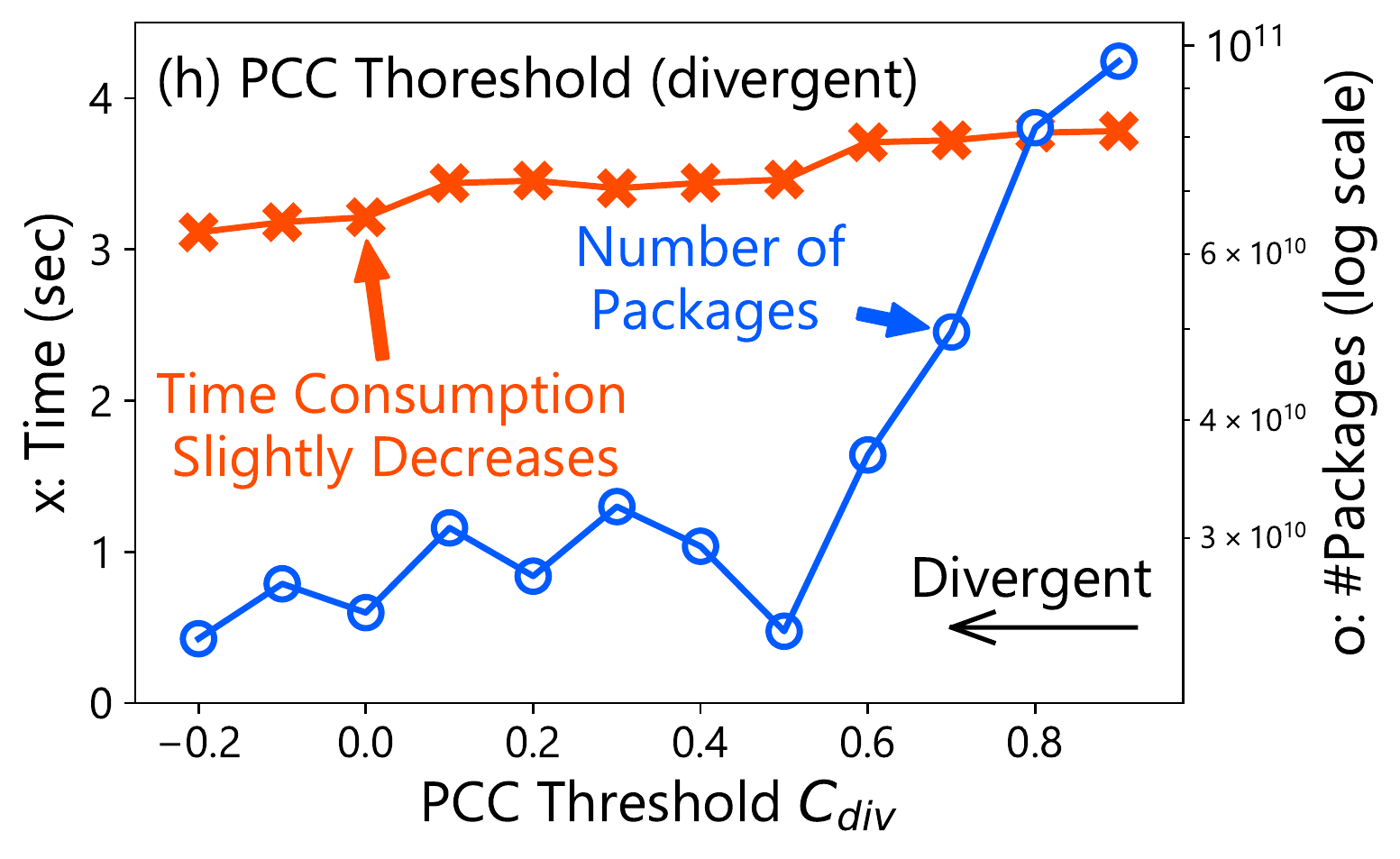}
\end{minipage}
\vspace{-0.12in}
\caption{Speed and number of qualified packages in various settings. The number of packages shown on the right axis is in a log scale. The time consumption shown on the left axis is in a linear scale, except for Figure (c).}
\label{fig: result}
\vspace{-0.12in}
\end{figure*}

\subsection{Experimental settings}

\uline{\textbf{Datasets.}} We use two standard datasets for recommendations.
\begin{itemize}
    \item \textbf{MovieLens} \cite{harper2016movielens}. We use five versions of this dataset, 100k, 1M, 10M, 20M, and 25M, to investigate the scalability of our method. An example scenario induced by this dataset is recommending a bundle of movies to a group of friends who are planning to see several movies together on their day off.
    \item \textbf{Amazon Home and Kitchen} \cite{he2016ups, mcauley2015image} contains reviews of home and kitchen products on amazon.com. To remove noisy items and users, we extract $10$-cores of the dataset, i.e., we iteratively remove items and users with less than $10$ interactions until all items and users have at least $10$ interactions. After preprocessing, the dataset contains $1395$ users, $1171$ items, and $25445$ interactions in total. This dataset is an example of bundle recommendations in e-commerce \cite{zhu2014bundle}.
\end{itemize}

\noindent \uline{\textbf{Group Generation.}} We use three strategies to create synthetic groups following \cite{kaya2020ensuring, serbos2017fairness}:
\begin{itemize}
    \item \textbf{Random:} Members are sampled uniformly at random without replacement.
    \item \textbf{Similar:} We sample members one by one. The first member is sampled uniformly at random. The following members are sampled uniformly at random from the set of users whose Pearson correlation coefficients (PCCs) of the rating vector to the already selected members are at least $C_\text{sim}$, where $C_\text{sim} \in [-1, 1]$ is a threshold hyperparameter. The generated groups tend to have similar preferences.
    \item \textbf{Divergent:} The sampling process is the same as the similar strategy except that the candidate members are users whose minimum PCC to the already selected members is at most $C_\text{div}$, where $C_\text{div} \in [-1, 1]$ is a threshold hyperparameter. The generated groups tend to have divergent preferences.
\end{itemize}

\noindent \uline{\textbf{Baselines.}} We use seven fair package recommendation methods.

\begin{itemize}
    \item \textbf{AveRanking \cite{baltrunas2010group}:} Items are ranked by average preferences. This algorithm maximizes the preference exactly but does not take fairness into consideration.
    \item \textbf{LMRanking \cite{baltrunas2010group}:} Items are ranked by the minimum preferences in the group.
    \item \textbf{GreedyVar \cite{Lin2017fairness}:} Items are chosen greedily so that the balanced score of the preferences and fairness scores is maximized. This method uses variance as the fairness score. We set $\lambda = 0.5$ (i.e., balancing completely).
    \item \textbf{LMRanking \cite{Lin2017fairness}:} Items are chosen greedily so that the balanced score of the preferences and fairness scores is maximized. This method uses the least misery as the fairness score. We set $\lambda = 0.5$ (i.e., balancing completely).
    \item \textbf{GFAR \cite{kaya2020ensuring}:} Items are chosen greedily so that the sum of the relevance scores in the group is maximized.
    \item \textbf{SPGreedy \cite{serbos2017fairness}:} Items are chosen greedily with respect to the proportionality. Although this is not exact, it showed excellent performance in proportionality in the original paper.
    \item \textbf{EFGreedy \cite{serbos2017fairness}:} Items are chosen greedily with respect to envy-freeness. Although this is not exact, it showed excellent performance in envy-freeness in the original paper.
\end{itemize}

\subsection{Speed and Number of Packages} \label{sec: result}

\begin{table*}[t]
    \centering
    \caption{Balancing scores. Proportionality, envy-freeness, and preference are normalized to be in $[0, 1]$. The total score is the sum of these values. Thus, the maximum total score is three. Our method achieves the best proportionality and envy-fairness, whereas its preferences are second or third best. Highest values are marked in \textbf{bold} in each column.}
    \vspace{-0.1in}
    \scalebox{0.8}{
    \begin{tabular}{c|cccc|cccc} \toprule
        & \multicolumn{4}{c|}{MovieLens1M} & \multicolumn{4}{c}{Amazon} \\
        & \multicolumn{1}{c}{Proportionality} & \multicolumn{1}{c}{Envy-freeness} & \multicolumn{1}{c}{Preference} & \multicolumn{1}{c|}{Total Score} & \multicolumn{1}{c}{Proportionality} & \multicolumn{1}{c}{Envy-freeness} & \multicolumn{1}{c}{Preference} & \multicolumn{1}{c}{Total Score} \\ \midrule
        AveRanking \cite{baltrunas2010group} & \bft{1.000 $\pm$ 0.000} & 0.725 $\pm$ 0.156       & \bft{0.911 $\pm$ 0.027} & 2.636 $\pm$ 0.171       & \bft{1.000 $\pm$ 0.000} & 0.500 $\pm$ 0.125       & \bft{0.939 $\pm$ 0.012} & 2.439 $\pm$ 0.126 \\
        LMRanking \cite{baltrunas2010group}  & 0.988 $\pm$ 0.037       & 0.588 $\pm$ 0.168       & 0.876 $\pm$ 0.036       & 2.451 $\pm$ 0.188       & 0.912 $\pm$ 0.263       & 0.425 $\pm$ 0.139       & 0.924 $\pm$ 0.031       & 2.261 $\pm$ 0.373 \\
        GreedyVar \cite{Lin2017fairness}     & 0.912 $\pm$ 0.080       & 0.750 $\pm$ 0.112       & 0.812 $\pm$ 0.035       & 2.474 $\pm$ 0.157       & 0.787 $\pm$ 0.202       & 0.637 $\pm$ 0.088       & 0.859 $\pm$ 0.031       & 2.284 $\pm$ 0.287 \\
        GreedyLM \cite{Lin2017fairness}      & 0.950 $\pm$ 0.061       & 0.775 $\pm$ 0.109       & 0.813 $\pm$ 0.036       & 2.538 $\pm$ 0.155       & 0.662 $\pm$ 0.159       & 0.600 $\pm$ 0.094       & 0.853 $\pm$ 0.031       & 2.115 $\pm$ 0.249 \\
        GFAR \cite{kaya2020ensuring}         & 0.950 $\pm$ 0.061       & 0.762 $\pm$ 0.104       & 0.812 $\pm$ 0.038       & 2.525 $\pm$ 0.154       & 0.762 $\pm$ 0.142       & 0.650 $\pm$ 0.075       & 0.871 $\pm$ 0.025       & 2.284 $\pm$ 0.219 \\
        SPGreedy \cite{serbos2017fairness}   & \bft{1.000 $\pm$ 0.000} & 0.525 $\pm$ 0.156       & 0.851 $\pm$ 0.041       & 2.376 $\pm$ 0.167       & \bft{1.000 $\pm$ 0.000} & 0.375 $\pm$ 0.079       & 0.867 $\pm$ 0.015       & 2.242 $\pm$ 0.085 \\
        EFGreedy \cite{serbos2017fairness}   & 0.925 $\pm$ 0.127       & \bft{1.000 $\pm$ 0.000} & 0.792 $\pm$ 0.053       & 2.717 $\pm$ 0.165       & 0.750 $\pm$ 0.244       & 0.838 $\pm$ 0.080       & 0.854 $\pm$ 0.027       & 2.441 $\pm$ 0.302 \\
        Ours                                 & \bft{1.000 $\pm$ 0.000} & \bft{1.000 $\pm$ 0.000} & 0.888 $\pm$ 0.037       & \bft{2.888 $\pm$ 0.037} & \bft{1.000 $\pm$ 0.000} & \bft{0.912 $\pm$ 0.057} & 0.913 $\pm$ 0.020       & \bft{2.825 $\pm$ 0.064} \\ \midrule
        Ours ($10$th)                        & \bft{1.000 $\pm$ 0.000} & \bft{1.000 $\pm$ 0.000} & 0.887 $\pm$ 0.037       & 2.887 $\pm$ 0.037       & \bft{1.000 $\pm$ 0.000} & \bft{0.912 $\pm$ 0.057} & 0.911 $\pm$ 0.020       & 2.824 $\pm$ 0.064 \\
        Ours ($100$th)                       & \bft{1.000 $\pm$ 0.000} & \bft{1.000 $\pm$ 0.000} & 0.881 $\pm$ 0.040       & 2.881 $\pm$ 0.040       & \bft{1.000 $\pm$ 0.000} & 0.900 $\pm$ 0.050       & 0.905 $\pm$ 0.025       & 2.805 $\pm$ 0.058 \\
        Ours (random)                        & \bft{1.000 $\pm$ 0.000} & \bft{1.000 $\pm$ 0.000} & 0.720 $\pm$ 0.044       & 2.720 $\pm$ 0.044       & \bft{1.000 $\pm$ 0.000} & \bft{0.912 $\pm$ 0.057} & 0.862 $\pm$ 0.031       & 2.774 $\pm$ 0.080 \\
        \bottomrule
    \end{tabular}
    }
    \label{tab: tradeoff}
    \vspace{-0.12in}
\end{table*}
We investigate the relations among the speed of \textsc{FAPE}, the number of packages, and various parameters of the problem, including the dataset size, package size $K$, group size $g$, threshold $\Delta$, threshold $\tau$, fairness criteria, and grouping criteria. We set $g = 8$, $K = 4$, $\Delta = 5$ for proportionality, $\Delta/100 = 2/g$ for envy-freeness, $\tau = g$, the MovieLens1M dataset, proportionality criterion, and random grouping as the default setting and vary each aspect in the following analysis. We use a Linux server with an Intel Xeon CPU E7-4830 @ 2.00GHz and 1TB RAM. The time consumption reported in the next subsection is evaluated on one core of the CPU.  Figure \ref{fig: result} shows the results. Note that the number of packages is plotted on a log scale because the combination of packages can be exponentially large. We plot time consumption on a linear scale, except for the group size experiment, where an exponential time is expected due to Theorem \ref{thm: enum}. The analysis of the results is discussed below.

\noindent \textbf{Dataset Size} (Figure \ref{fig: result} (a)): We use datasets of various sizes: MovieLens100k, 1M, 10M, 20M, and 25M. They contain $1682$, $3706$, $10677$, $26744$, and $59047$ items, respectively. Although the problem is NP-hard, the time consumption of \textsc{FAPE} scales only \emph{linearly}, and it scales to large datasets owing to the fixed parameter tractability. This result is consistent with Theorem \ref{thm: enum}. It can also be seen that the number of completely fair packages, where all members are satisfied, is unexpectedly large (e.g., > $10^8$ packages for MovieLens100k). The existing algorithms (e.g., SPGreedy \cite{serbos2017fairness}) optimize scores by some procedures and output only one package from as many as $10^8$ candidates. It is not clear which package is chosen by these methods as a consequence of the optimization procedures. By contrast, our method enables us to choose packages in a way each user specifies using filtering operations (see Sections \ref{sec: op} and \ref{sec: balance}).

\noindent \textbf{Package Size} (Figure \ref{fig: result} (b)): We vary the package size $K$ from $2$ to $8$. The number of packages increases exponentially with respect to $K$ because $O(n^K)$ combinations of items exist. Nonetheless, the time consumption scales \emph{linearly}, which is consistent with Theorem \ref{thm: enum}.

\noindent \textbf{Group Size} (Figure \ref{fig: result} (c)): We vary the size $g$ of groups from $2$ to $16$. Note that the time axis is on a log scale only for this experiment. As expected from Theorem \ref{thm: enum}, the time consumption increases exponentially. This is the main limitation of our proposed algorithm. However, we stress that many applications in the literature consider at most $8$ members (e.g., families). Thus, our algorithm is tractable in these cases. We observe that the number of qualified packages decreases because, with large groups, it is difficult to satisfy all.

\noindent \textbf{Proportionality Threshold} (Figure \ref{fig: result} (d)): We vary the threshold $\Delta$ that determines the set of items that each user likes. We observe that the number of qualified packages converges to $\binom{n}{K}$ because all packages are fair if all members like all items. It can also be observed that the time consumption increases first and then decreases. This indicates that the intermediate case is the most difficult. Overall, our algorithm is efficient in all regions.

\noindent \textbf{Envy-freeness Threshold} (Figure \ref{fig: result} (e)): We use envy-freeness instead of proportionality as the fairness criterion. We vary the threshold $\Delta$. Note that $\Delta/100 = k/g$ means that a member is envy-free for item $i$ if his/her rating for item $i$ is in top-$k$ in the group. Similar to proportionality, the time consumption increases and then decreases, and the number of packages converges to $\binom{n}{K}$.

\noindent \textbf{Threshold} (Figure \ref{fig: result} (f)): We vary the threshold $\tau$ of proportionality from $1$ to $8 (= g)$. As the constraint becomes more stringent, the number of qualified packages decreases.

\noindent \textbf{Grouping} (Figures \ref{fig: result} (g) and (h)): We use similar and divergent grouping instead of random grouping. The number of packages increases as the group becomes homogeneous. The time consumption does not change much regardless of the properties of the groups.

Overall, our algorithm is efficient in various settings. It enumerates as many as $10^{12}$ items within a few seconds.

\subsection{Balancing Fairness and Preference} \label{sec: balance}

Ensuring fairness is not the only requirement in package-to-group recommendations. Even if a package is envy-free, it is useless if all members dislike items in the package. Balancing fairness and preference is important. In addition, both proportionality and envy-freeness may be simultaneously required because they model different aspects of fairness. We show that our method achieves high proportionality, envy-freeness, and preference simultaneously by the intersection and optimization operations introduced in Section \ref{sec: op}. The objective score is the sum of proportionality, envy-freeness, and total preference $\mathcal{S}(\mathcal{P})$. We call it the total score. We divide the proportionality and the envy-freeness by $g$ and normalize the preference to ensure that each term is within $[0, 1]$.

First, we run \textsc{FAPE} with each threshold $\tau \in \{1, 2, \cdots, g\}$ and build ZDDs for proportionality and envy-freeness.
For each $\tau$ and $\tau'$, we build the ZDD $\mathfrak{I}_{\tau, \tau'}$ with at least $\tau$ proportionality and $\tau'$ envy-freeness by the intersection operation. Then, we draw the package with the maximum total preference from each ZDD by the optimization operation. We adopt the package with the maximum objective score among $\mathfrak{I}_{\tau, \tau'}$. This procedure outputs the \emph{exactly} optimal package with respect to the total score in a reasonable time.

We compare the performance of our method and existing methods. We set $g = 8$, $K = 4$, $\Delta = 5$ for proportionality, and $\Delta/100 = 2/g$ for envy-freeness. Table \ref{tab: tradeoff} reports the average scores and standard deviations for randomly created $10$ groups. It shows that our method achieves the best performance in both proportionality and envy-freeness simultaneously. It is remarkable that on the Amazon dataset, the envy-freeness of our method is better than that of EFGreedy, which is designed to optimize envy-freeness. Although AveRanking, which optimizes preference exactly, performs better than our method in preference, it performs poorly in envy-freeness and may frustrate some members. Our method computes packages with slightly worse preference but much better fairness than AveRanking. Overall, our method outperforms the other methods with respect to the total score. It indicates that our method strikes a better balance of fairness and preference than the existing methods. 

We also compute $10$-th and $100$-th best packages and random packages from the best ZDD (i.e., the ZDD that contains the best package) using the optimization and sampling operations. Recall that even random packages are satisfactory because the ZDD contains only fair packages. The bottom three rows in Table \ref{tab: tradeoff} report the results. These results indicate that our method provides many effective candidates, whereas the existing fair package-to-group methods provide only one candidate.

\section{Related Work}

Fairness is a crucial aspect in many applications \cite{kamishima2012enhancement, yao2017beyond, bruke2017mltisided, hardt2016equality, zafar2017fairness}. In this work, we study a special case of recommendation tasks, package-to-group recommendations \cite{qi2016recommending, serbos2017fairness}. In group recommendations, items are recommended to a group instead of an individual \cite{jameson2007recommendation}. \citet{serbos2017fairness} and others \cite{qi2016recommending, xiao2019beyond, kaya2020ensuring} found that group recommender systems were also biased against some members in a group. In packages recommendations (or composite recommendations \cite{xie2010breaking} or bundle recommendations \cite{zhu2014bundle}), a set of items forms a unified package toward a single common goal \cite{xie2010breaking, yahia2014composite, zhu2014bundle, xie2014generating}. As \citet{xie2010breaking} pointed out, recommending top-$K$ packages (thus, a list of sets of items) is crucial to provide choices of packages to users. \citet{deng2012complexity} proved the complexities of package recommendations. Their results are fundamentally different from ours in two aspects. First, we consider fair and group recommendations which were not considered by them. Second, we derived FPT algorithms for practical implementations, whereas their interest was in theoretical aspects. Although the complexities of fair item allocation problems have been studied \cite{bouveret2008efficiency, bliem2016complexity}, they are not in the context of recommendations nor enumeration.

The most related works are \cite{qi2016recommending, serbos2017fairness}, where fair package-to-group recommender systems were proposed. There are two crucial differences between this work and theirs. First, we \emph{enumerate} all fair packages, whereas they output a single package as a result of optimization. Second, our algorithm does not resort to any approximations. Thus, our method provably performs better than their methods while keeping efficiency owing to its fixed time tractability. 

\vspace{0.05in}
\noindent \uline{\textbf{Difference with Maximization Algorithms.}} We point out several crucial differences between enumeration and maximization algorithms. One may think that maximization algorithms, such as SPGreedy and EFGreedy, can also enumerate packages by creating solutions and removing them from the search space iteratively. However, greedy algorithms just avoid exact matches, change only the last item, and fail to provide diverse packages. Besides, it is not obvious how to retrieve more than one packages using ranking-based algorithms, such as AveRanking and LMRanking. Furthermore, it is infeasible to enumerate all fair packages by generating them one by one because there are as many as $10^{14}$ candidates. By contrast, our algorithm can enumerate all packages all at once and enables filtering favorite packages by a variety of operations.

\section{Conclusion}

In this paper, we investigated the fair package-to-group recommendation problem. We proposed \emph{enumerating} all fair packages instead of computing a single package. Although the enumeration problem is computationally challenging, we proved that it is FPT with respect to the group size and proposed an efficient algorithm based on ZDDs. Our proposed algorithm can not only enumerate packages but also filter items by the intersection operation, optimize preferences by the linear Boolean programming, and sample packages uniformly and randomly. We experimentally confirmed that our proposed method scales to large datasets and can enumerate as many as one trillion packages in a reasonable time.


\begin{acks}
This work was supported by JSPS KAKENHI GrantNumber 21J22490.
\end{acks}

\bibliographystyle{plainnat}
\bibliography{sample-base}










\end{document}